\documentclass[runningheads]{llncs}
\usepackage{graphicx}
\usepackage{amsmath}
\usepackage[utf8]{inputenc}
\usepackage[lined,algo2e,vlined,ruled, linesnumbered]{algorithm2e}
\usepackage[normalem]{ulem}
\usepackage{thm-restate}
\usepackage{tikz}
\usetikzlibrary{arrows, positioning,decorations.pathreplacing, tikzmark,shapes}
\usetikzlibrary{plotmarks}

\DeclareMathAlphabet{\mathcal}{OMS}{cmsy}{m}{n}
\DeclareMathAlphabet{\bm}{OT1}{ptm}{b}{it}

\newcommand{\probl}{\text{MCPC}}
\newcommand{\cnp}{\text{NP}}
\newcommand{\cp}{\text{P}}
\newcommand{\val}{p}

\newcommand{\bgt}{B}
\newcommand{\cst}{c}
\newcommand{\ccap}{U}
\newcommand{\iset}{\mathcal{I}}
\newcommand{\sset}{\mathcal{S}}
\newcommand{\kset}{\mathcal{K}} 
\newcommand{\cset}{\mathcal{C}} 
\newcommand{\alg}{\ensuremath{\textsc{alg}}} 
\newcommand{\greedy}{\ensuremath{\textsc{greedy}}} 
\newcommand{\res}{\ensuremath{\textit{res}}}
\newcommand{\opt}{\ensuremath{\textsc{opt}}} 
\newcommand{\nkincarg}[1]{\ensuremath{q(#1)}} 
\newcommand{\crit}[1]{\ensuremath{\kappa}(#1)} 
\newcommand{\kcu}{\ensuremath{\textit{US}}}
\newcommand{\kcs}{\ensuremath{\textit{SS}}}
\newcommand{\usarg}[1]{\ensuremath{\kcu(#1)}}
\newcommand{\ssarg}[1]{\ensuremath{\kcs(#1)}}

\newcommand{\iso}{\ensuremath{\iota}}

\newcommand{\splits}{\ensuremath{\varsigma}}
\newcommand{\ip}{\ensuremath{\textrm{IP}}}
\newcommand{\lp}{\ensuremath{\textrm{LP}}}
\newcommand{\seq}[1]{\ensuremath{\langle#1\rangle}}
\newcommand{\vneg}{\par\vspace*{-1\bigskipamount}}
\newcommand{\ssetarg}[1]{\ensuremath{\sset(#1)}}
\newcommand{\ksetarg}[1]{\ensuremath{\kset(#1)}}
\newcommand{\csetarg}[1]{\ensuremath{\cset(#1)}}
\newcommand{\cov}[1]{\ensuremath{\cup#1}}
\newcommand{\mset}[1]{\ensuremath{\{#1\}}}
\newcommand{\msset}[2]{\ensuremath{\{ #1 \, | \, #2\}}}

\begin{document}
\title{Maximum Coverage with Cluster Constraints: An LP-Based Approximation Technique}

\titlerunning{Maximum Coverage with Cluster Constraints}
\author{Guido Sch\"afer\inst{1,2} \and\\
Bernard G. Zweers\inst{1} }

\authorrunning{G.Sch\"afer and B.G. Zweers.}
\institute{Centrum Wiskunde \& Informatica (CWI), Amsterdam, the Netherlands\\
\email{\{g.schaefer,b.g.zweers\}@cwi.nl}
\and
Vrije Universiteit, Amsterdam, the Netherlands}

\maketitle              

\begin{abstract}
Packing problems constitute an important class of optimization problems, both because of their high practical relevance and theoretical appeal. However, despite the large number of variants that have been studied in the literature, most packing problems encompass a single tier of capacity restrictions only. For example, in the \emph{Multiple Knapsack Problem}, we want to assign a selection of items to multiple knapsacks such that their capacities are not exceeded. But what if these knapsacks are partitioned into \emph{clusters}, each imposing an additional (aggregated) capacity restriction on the knapsacks contained in that cluster? 

In this paper, we study the \emph{Maximum Coverage Problem with Cluster Constraints (\probl)}, which generalizes the \emph{Maximum Coverage Problem with Knapsack Constraints (MCPK)} by incorporating such cluster constraints. Our main contribution is a general LP-based technique to derive approximation algorithms for such cluster capacitated problems. Our technique basically allows us to reduce the cluster capacitated problem to the respective original packing problem (i.e., with knapsack constraints only). By using an LP-based approximation algorithm for the original problem, we can then obtain an effective rounding scheme for the problem, which only loses a small fraction in the approximation guarantee. 

We apply our technique to derive approximation algorithms for \probl. To this aim, we develop an LP-based $\frac{1}{2}(1-\frac1e)$-approximation algorithm for MCPK by adapting the \emph{pipage rounding technique}. Combined with our reduction technique, we obtain a $\frac{1}{3}(1-\frac{1}{e})$-approximation algorithm for \probl. 
We also derive improved results for a special case of \probl, the \emph{Multiple Knapsack Problem with Cluster Constraints (MKPC)}. Based on a simple greedy algorithm, our approach yields a $\frac13$-approximation algorithm. By combining our technique with a more sophisticated iterative rounding approach, we obtain a $\frac{1}{2}$-approximation algorithm for certain special cases of MKPC. 

\keywords{Budgeted maximum coverage problem\and multiple knapsack problem \and pipage rounding \and iterative rounding.}
\end{abstract}

\section{Introduction}

Many optimization problems encountered in real-life can be modeled as a \emph{packing problem}, where a set of diverse objects (or items, elements) need to be packed into a limited number of containers (or knapsacks, bins) such that certain feasibility constraints are satisfied. Given their high practical relevance, there is a large variety of packing problems that have been studied in the literature. For example, the monograph by Kellerer et al.~\cite{Kellerer2004} discusses more than 20 variants of the classical \emph{Knapsack Problem (KP)} alone. 

However, despite the large number of variants that exist, most packing problems encompass a \emph{single} tier of capacity restrictions only. For the sake of concreteness, consider the \emph{Multiple Knapsack Problem (MKP)}: In this problem, the goal is to find a most profitable selection of items that can be assigned to the (multiple) knapsacks without violating their capacities, i.e., there is a single capacity constraint per knapsack that needs to be satisfied. But what about the setting where these knapsacks are partitioned into \emph{clusters}, each imposing an additional (aggregated) capacity restriction on all knapsacks contained in it? 

We believe that the study of such problems is well motivated and important, both because of their practical relevance and theoretical appeal. For example, a potential application of the \emph{Multiple Knapsack Problem with Cluster Constraints (MKPC)} (as outlined above) is when items have to be packed into containers, obeying some weight capacities, and these boxes have to be loaded onto some weight-capacitated vehicles (e.g., ships). Another example is a situation in which customers have to be assigned to facilities, which have a certain capacity, but these facilities are also served by larger warehouses that again have a restricted capacity.

\subsection{Maximum Coverage with Cluster Constraints}
In this paper, we propose and initiate the study of such extensions for packing problems. More specifically, we consider the following packing problem, which we term the \emph{Maximum Coverage Problem with Cluster Constraints (\probl)} (see Section~\ref{s:prelim} for formal definition): Basically, in this problem we are given a collection of subsets of items, where each subset is associated with some cost and each item has a profit, and a set of knapsacks with individual capacities. In addition, the knapsacks are partitioned into clusters which impose additional capacity restrictions on the total cost of the subsets assigned to the knapsacks in each cluster. The goal is to determine a feasible assignment of a selection of the subsets to the knapsacks such that both the knapsack and the cluster capacities are not exceeded, and the total profit of all items covered by the selected subsets is maximized. 

As we detail in the related work section below, \probl\ is related to several other packing problems but has not been studied in the literature before (to the best of our knowledge). It generalizes several fundamental packing problems such as the \emph{Maximum Coverage Problem with Knapsack Constraints (MCPK)} (see, e.g., \cite{Ageev2004,Khuller99}), which in turn is a generalization of the Multiple Knapsack Problem (MKP). 
Another important special case of \probl\ is what we term the \emph{Multiple Knapsack Problem with Cluster Constraints (MKPC)} (as outlined above). Also this problem has not been addressed in the literature before and will be considered in this paper. 

All problems considered in this paper are (strongly) \cnp-hard (as they contain MKP as a special case), which rules out the existence of a polynomial-time algorithm to compute an optimal solution to the problem (unless $\cnp = \cp$).
We therefore resort to the development of approximation algorithms for these problems. Recall that an \emph{$\alpha$-approximation algorithm} computes a feasible solution recovering at least an $\alpha$-fraction ($0 \le \alpha \le 1$) of the optimal solution in polynomial time. Note that the \emph{Maximum Coverage Problem} is a special case of \probl\ and this problem is known to be $(1-\frac{1}{e})$-inapproximable (see \cite{Feige1998}). As a consequence, we cannot expect to achieve an approximation factor better than this for \probl.

We remark that while our focus here is mostly on the extensions of MCPK and MKP, the idea of imposing a second tier of capacity restrictions is generic and can be applied to other problems as well (not necessarily packing problems only).\footnote{In fact, some preliminary results (omitted from this paper) show that our technique can also be applied to derive an approximation algorithm for the capacitated facility location problem with cluster constraints.}

\subsection{Our Contributions}%
We present a general technique to derive approximation algorithms for packing problems with two tiers of capacity restrictions. Basically, our idea is to extend the natural integer linear programming (ILP) formulation of the original packing problem (i.e., without cluster capacities) by incorporating the respective cluster capacity constraints. But, crucially, these new constraints are set up in such a way that an optimal solution to the LP relaxation of this formulation defines some \emph{reduced capacities} for each knapsack individually. This enables us to reduce the cluster capacitated problem to the respective original packing problem with knapsack constraints only. We then use an LP-based approximation algorithm for the original problem to round the optimal LP-solution. This rounding requires some care because of the reduced capacities (see below for details). 

Here we apply our technique to derive approximation algorithms for \probl\ and MKPC. As mentioned, to this aim we need LP-based approximation algorithms for the respective problems without cluster capacities, namely MCPK and MKP, respectively. While for the latter a simple greedy algorithm gives a $\frac{1}{2}$-approximation, we need more sophisticated techniques to derive an LP-based approximation algorithm for MCPK. In particular, we adapt the \emph{pipage rounding technique} \cite{Ageev2004} to obtain an LP-based $\frac{1}{2}(1-\frac{1}{e})$-approximation for this problem (Section \ref{ss:MCKP}), which might be of independent interest; this is also one of our main technical contributions in this paper.
Based on these algorithms, we then obtain a $\frac{1}{3}(1-\frac{1}{e})$-approximation algorithm for \probl\ (Section \ref{ss:MCPC}) and a $\frac{1}{3}$-approximation algorithm for MKPC (Section \ref{s:additionalApplications}).
Finally, we show that by combining our technique with a more sophisticated iterative rounding approach, we can obtain an improved $\frac{1}{2}$-approximation algorithm for a special case of the MKPC (Section~\ref{s:additionalApplications}); this part is another main technical contribution of the paper. Our iterative rounding approach applies whenever the clusters satisfy a certain \emph{isolation property} (which is guaranteed, for example, if the clusters can be disentangled in a natural way). 

\subsection{Related Literature}%
The coverage objective that we consider in this paper is a special case of a submodular function and there is a vast literature concerning the problem of maximizing a (monotone) submodular function. Nemhauser and Wolsey~\cite{Nemhauser1978} were the first to study this problem under a cardinality constraint. 
The authors propose a natural greedy algorithm to solve this problem and showed that it achieves a $(1-\frac{1}{e})$-approximation guarantee. Later, Feige~\cite{Feige1998} shows that the factor of $(1-\frac{1}{e})$ is best possible for this problem (unless $\cp= \cnp$). 

Khuller et al.~\cite{Khuller99} use a modification of the greedy algorithm of \cite{Nemhauser1978} in combination with partial enumeration to obtain a $(1-\frac{1}{e})$-approximation algorithm for the Maximum Coverage Problem with a budget constraint. 
Ageev and Sviridenko \cite{Ageev2004} introduce the technique of \emph{pipage rounding} and also derive a $(1-\frac{1}{e})$-approximation algorithm for this problem. Sviridenko \cite{Sviridenko2004} observed that the algorithm of \cite{Khuller99} can be applied to submodular functions and achieves a $(1-\frac1e)$-approximation ratio.
Badanidiyuru and Vondr{\'a}k \cite{Badanidiyuru2014} use another approach for maximizing a submodular function given a knapsack constraint. They derive a $(1-\frac{1}{e}-\varepsilon)$-approximation algorithm whose running time decreases in $\varepsilon>0$. 
Ene and Nguyen \cite{Ene2019} show that there are some technical issues with the approach of Badanidiyuru and Vondr{\'a}k \cite{Badanidiyuru2014}, but they build upon their idea to derive a faster algorithm with the same approximation ratio.  
For non-monotone submodular, a $(1-\frac{1}{e} -\varepsilon)$-approximation algorithm is given by Kulik et al.~ \cite{Kulik2013}.

Further, the problem of maximizing a submodular function subject to multiple knapsack constraints has also been studied (see \cite{Chekuri2014,Kulik2013,Lee2010}): 
Kulik et al.~\cite{Kulik2013} obtain a  $(1-\frac{1}{e} -\varepsilon)$-approximation algorithm (for any $\varepsilon>0$) for the Maximum Coverage Problem with a $d$-dimensional knapsack constraint. The technique also extends to non-monotone submodular maximization with a $d$-dimensional knapsack constraint.
Lee et al.~\cite{Lee2010} give a $(\frac{1}{5}-\varepsilon)$-approximation algorithm for non-monotone submodular maximization with a $d$-dimensional knapsack. The randomized rounding technique that is used in \cite{Lee2010} is similar to that of \cite{Kulik2013}, but the algorithm to solve the fractional relaxation is different. 
However, in all three works \cite{Chekuri2014,Kulik2013,Lee2010}, the interpretation of multiple knapsack constraints is different from the one we use here. In particular, in their setting the costs of the sets and the capacity of a single knapsack are $d$-dimensional and a feasible assignment needs to satisfy the capacity in every dimension. On the other hand, in our definition there are multiple knapsacks, but their capacity is only one-dimensional. Hence, the techniques in \cite{Chekuri2014,Kulik2013,Lee2010} do not apply to our problem.

Simultaneously to our work, two papers have appeared that also discuss the MCPK (see \cite{Fairstein2020}  and \cite{Sun2020}). Fairstein et al.~ \cite{Fairstein2020}, present a randomized $\left(1-\frac{1}{e}-\varepsilon\right)$-approximation algorithm for the MCPK in which a greedy approach for submodular maximization is combined with a partitioning of knapsacks in groups of approximately the same size. In Sun et al.~\cite{Sun2020}, a deterministic greedy $\left(1-\frac{1}{e}-\varepsilon\right)$-approximation algorithm is given for the case in which all knapsacks have the same size. For the general case, their deterministic algorithm has an approximation ratio of $\frac{1}{2} - \varepsilon$, and they present a randomized  $\left(1-\frac{1}{e}-\varepsilon\right)$-approximation algorithm.

Another problem that is related to our \probl\ is the Cost-Restricted Maximum Coverage Problem with Group Budget Constraints (CMCG) introduced by Chekuri and Kumar~\cite{Chekuri2004}. In this problem, there are predefined groups that are a union of sets. Each group has a budget that must not be exceeded by its assigned sets, and there is a single knapsack constraint for all selected sets. The authors give a greedy algorithm that achieves an approximation ratio of $\frac{1}{12}$. Besides this cost-restricted version, Chekuri and Kumar~\cite{Chekuri2004} also study the cardinality-restricted version for the number of sets assigned to a group. For this problem, they obtain a $\frac{1}{\alpha+1}$-approximation algorithm that uses an oracle that, given a current solution, returns a set which contribution to the current solution is within a factor $\alpha$ of the optimal contribution of a single set. For the cost-restricted version, Farbstein and Levin \cite{Farbstein2017}  obtain a $\frac{1}{5}$-approximation. Guo et al.~\cite{Guo2019} present a pseudo-polynomial algorithm for CMCG whose approximation ratio is $(1-\frac{1}{e})$. 
The \probl\ with a single cluster is a special case CMCG, in which each group correspond to a knapsack and each set has a copy for each feasible knapsack. If the solution for a single cluster is seen as a new set, then the \probl\ with multiple clusters can be seen as the cardinality restricted version in which each cluster correspond to a group to which at most one set can be assigned. Combining the approaches of Farbstein and Levin~\cite{Farbstein2017} and Chekuri and Kumar~\cite{Chekuri2004} for the cost-restricted and cardinalty-restricted version, respectively, we obtain a $\frac{1}{6}$-approximation algorithm. In this paper, we improve this ratio to $\frac{1}{3}\left(1-\frac{1}{e}\right)$.

Finally, we elaborate on the relationship between MKPC and closely related problems. We focus on three knapsack problems that look similar to MKPC but are still slightly different. 
Nip and Wang~\cite{Nip2018} consider the \emph{Two-Phase Knapsack Problem (2-PKP)} and obtain a $\frac{1}{4}$-approximation algorithm for this problem. in 2-PKP, items are assigned to knapsacks and the full knapsack needs to be packed in a cluster. Note that MKPC and 2-PKP are different because in MPKC only the costs assigned to a knapsack are restricted by the cluster capacity, whereas in 2-PKP each knapsack contributes with its maximum budget to the cluster. 
Dudzinski and Walukiewucz~\cite{Dudzinski1987} study the \emph{Nested Knapsack Problem (NKP)}, where there are multiple subsets of items and each subset has a capacity. That is, in NKP there are no predefined knapsacks to which the items can be assigned, but from each set we have to select the most profitable items. If these sets are disjoint, the problem is called the \emph{Decomposed Knapsack Problem}. 
The authors present an exact branch-and-bound method, based on the Lagrangean relaxation.
Xavier and Miyazawa~\cite{Xavier2006} consider the \emph{Class Constraint Shelf Knapsack Problem (CCSKP)}. Here, there is one knapsack with a certain capacity and the items assigned to the knapsack should also be assigned to a shelf. The shelf has a maximum capacity, and between every two shelves a divisor needs to be placed. In the CCSKP, each item belongs to a class and each shelf must only have items of the same class. The authors derive a PTAS for CCSKP in \cite{Xavier2006}.

\section{Preliminaries}
\label{s:prelim}

The \emph{Maximum Coverage Problem with Cluster Constraints (\probl)} considered in this paper is defined as follows: We are given a collection of $m$ subsets $\sset = \mset{S_1, \dots, S_m}$ over a ground set of items $\iset=[n]$.%
\footnote{Throughout the paper, given an integer $n \ge 1$ we use $[n]$ to refer to the set $\mset{1, \dots, n}$.}
Each subset $S_j \subseteq \iset$, $j \in [m]$, is associated with a cost $\cst_j > 0$ and each item $i \in \iset$ has a profit $\val_i > 0$. 
For notational simplicity, we identify $\sset = [m]$, and for every subset $\sset' \subseteq \sset$, define $\cov{\sset'} = \cup_{j \in \sset'} S_j \subseteq \iset$ as the set of items covered by $\sset'$. 
Further, we use $\ssetarg{i} \subseteq \sset$ to refer to the subsets in $\sset$ that contain item $i \in \iset$, i.e., $\ssetarg{i} = \msset{j \in \sset}{i \in S_j}$.
In addition, we are given a set of knapsacks $\kset = [p]$ and each knapsack $k \in \kset$ has a capacity $\bgt_k > 0$. These knapsacks are partitioned into a set of $q$ clusters $\cset = [q]$ and each cluster $l \in \cset$ has a separate capacity $\ccap_l > 0$. We denote by $\ksetarg{l}\subseteq \kset$ the subset of knapsacks that are contained in cluster $l$. The number of knapsacks in the set $\ksetarg{l}$ is given by $\nkincarg{l}:=|\ksetarg{l}|$. Also, we use $\csetarg{k} \in \cset$ to refer to the cluster containing knapsack $k \in \kset$. 

Our goal is to determine a feasible assignment $\sigma: \sset \rightarrow \kset \cup \mset{0}$ of subsets to knapsacks such that the total profit of all covered items is maximized. Each subset $j \in \sset$ can be assigned to at most one knapsack in $\kset$ and we define $\sigma(j) = 0$ if $j$ remains unassigned.  
We say that an assignment $\sigma$ is \emph{feasible} if 
(i) for every knapsack $k \in \kset$, the total cost of all subsets assigned to $k$ is at most $\bgt_k$, and 
(ii) for every cluster $l \in \cset$, the total cost of all subsets assigned to the knapsacks in $\ksetarg{l}$ of cluster $l$ is at most $\ccap_l$.
Given an assignment $\sigma$, let $\sset_\sigma(k) \subseteq \sset$ be the set of subsets assigned to knapsack $k \in \kset$ under $\sigma$, and let $\sset_\sigma = \cup_{k \in \kset} \sset_\sigma(k) \subseteq \sset$ be the set of all assigned subsets.
Using the notation above, the \probl\ problem is formally defined as follows: 
\[
\max_\sigma 
\Bigg\{ 
\sum_{i \in \cov{\sset_\sigma}} p_i \; \bigg| \;
\sum_{j \in \sset_\sigma(k)} c_j \le \bgt_k \;\; \forall k \in \kset,
\sum_{k \in \ksetarg{l}} \sum_{j \in \sset_\sigma(k)} c_j \le \ccap_l \;\; \forall l \in \cset
\Bigg\}.
\]

We make the assumptions stated below throughout the paper. 
Note that these assumptions are without loss of generality: If the first condition is violated by some $k \in \ksetarg{l}$ then we can simply redefine $\bgt_k = \ccap_l$. If the second condition is not satisfied then the capacity of cluster $l$ is redundant because each knapsack in cluster $l$ can then contain its maximum capacity. 
We assume that for every cluster $l \in \cset$: (i) for every knapsack $k \in \ksetarg{l}$, $\bgt_k \le \ccap_l$, and (ii) $\sum_{k\in \ksetarg{l}}\bgt_k > \ccap_l$.

The special case of \probl\ in which all cluster capacities are redundant is called the \emph{Maximum Coverage Problem with Knapsack Constraints (MCPK)}. 
Another special case of MCPK is the classical \emph{Multiple Knapsack Problem (MKP)} which we obtain if $\sset = [n]$ and $S_i = \mset{i}$ for all $i \in \sset$. 
We refer to the generalization of MKP with (non-redundant) cluster capacities as the \emph{Multiple Knapsack Problem with Cluster Constraints (MKPC)}.

\section{Maximum Coverage Problem with Knapsack Constraints}
\label{ss:MCKP}

We start by deriving an approximation algorithm for the Maximum Coverage Problem with Knapsack Constraints (MCPK). 
The following is a natural integer linear programming (ILP) formulation:
\begin{subequations}
\begin{alignat}{3}
\max \quad 
& & L(x) = \textstyle\sum_{i \in \iset} \val_iy_i 						&				& & \label{eq:objILP} \\
\text{subject to} \quad 
& & 	\textstyle\sum_{j\in \sset} \cst_jx_{jk} 					& \leq \bgt_k 		& & \forall k \in \kset 		\label{eq:BudgetILP} \\
& & 	\textstyle\sum_{k \in \kset}x_{jk} 					& \leq 1 	   		& & \forall j \in \sset 		\label{eq:max1assigned} \\
& &	\textstyle\sum_{j\in \ssetarg{i}}\sum_{k \in \kset}x_{jk}	& \geq y_{i} 		& & \forall i\in\iset 		\label{eq:XYcoupleILP} \\
& & 	x_{jk} 										& \in \{0,1\} \qquad 	& & \forall j \in \sset, \ \forall k \in \kset \label{eq:intx} \\
& & 	y_i 											& \in \{0,1\} 		& & \forall i \in \iset \label{eq:intILP}
\end{alignat}
\end{subequations}
The decision variable $x_{jk}$ indicates whether set $j\in \sset$ is assigned to knapsack $k\in \kset$. Constraint \eqref{eq:BudgetILP} ensures that the total cost assigned to each knapsack is at most its budget and constraint \eqref{eq:max1assigned} makes sure that each set is assigned to at most one knapsack. 
In addition, the decision variable $y_i$ indicates whether item $i \in \iset$ is covered. Constraint \eqref{eq:XYcoupleILP} ensures that an item can only be covered if at least one of the sets containing it is assigned to a knapsack. 
We refer to above ILP \eqref{eq:objILP}--\eqref{eq:intILP} as (\ip) and to its corresponding linear programming relaxation as (\lp).
It is important to realize that for any feasible fixing of the $x_{jk}$-variables (satisfying \eqref{eq:BudgetILP} and \eqref{eq:max1assigned}), the optimal $y_i$-variabels are easily determined by setting $y_i =\min \{1, \sum_{j\in \ssetarg{i}}\sum_{k\in \kset}x_{jk}\}$. As a consequence, an optimal solution $(x, y)$ of the above program is fully specified by the corresponding $x$-part.%
\footnote{\label{rem:implicit-restrictions}We comment on a subtle but important point here: Note that a set $j \in \sset$ cannot be assigned to a knapsack $k \in \kset$ whenever $\cst_j > \bgt_k$. While these restrictions are taken care of implicitly in the ILP formulation (\ip), they are not in the LP-relaxation (\lp). In fact, we would have to add these restrictions explicitly by defining variables $x_{jk}$ only for all $j \in \sset$ and $k \in \kset(j)$, where $\kset(j) \subseteq \kset$ is the set of knapsacks whose capacity is at least $\cst_j$, and adapt the constraints accordingly. However, these adaptations are straight-forward, and for notational convenience, we do not state them explicitly. In the remainder, whenever we refer to (\lp) our understanding is that we refer to the corresponding LP-relaxation with these assignment restrictions incorporated.  No confusion shall arise.}

We crucially exploit that we can find a solution to the LP-relaxation which satisfies the \emph{bounded split property}.
Let $x$ be a fractional solution of (\lp). We define the \emph{support graph} $H_x$ of $x$ as the bipartite graph $H_x = (\sset \cup \kset, E_x)$ with node sets $\sset$ and $\kset$ on each side of the bipartition, respectively, and the edge set $E_x$ contains all edges $\mset{j,k}$ for which $x_{jk}$ is non-integral, i.e., 
$E_x = \msset{ \mset{j, k} \in \sset \times \kset}{0 < x_{jk} < 1}$.
Let $M \subseteq E_x$ be a matching of the support graph $H_x$.\footnote{Recall that a matching $M \subseteq E_x$ is a subset of the edges such that no two edges in $M$ have a node in common.} 
$M$ \emph{saturates} all nodes in $\sset$ if for every (non-isolated) node $j \in \sset$ there is a matching edge $\mset{j, k} \in M$; we also say that $M$ is an \emph{$\sset$-saturating matching} of $H_x$. 
A feasible solution $x$ of (\lp) satisfies the \emph{bounded split property} if the support graph $H_x$ of $x$ has an $\sset$-saturating matching.

For MKP, it is known that there exists an optimal solution $x^*$ of the LP-relaxation for which the bounded split property holds (see, e.g., \cite{Chekuri2005,Shmoys1993}). Deriving a $\frac12$-approximation algorithm for the problem is then easy: The idea 
is to decompose $x^*$ into an integral and a fractional part. The integral part naturally corresponds to a feasible integral assignment and, exploiting the bounded split property, the corresponding $\sset$-saturating matching gives rise to another feasible integral assignment. Thus, by taking the better of the two assignments, we recover at least half of the the optimal LP-solution. 
Unfortunately, for our more general MCPK problem the optimal solution of (\lp) does not necessarily satisfy the bounded split property.

Instead, below we show that there always exists a solution to the LP-relaxation of MCPK which satisfies the bounded split property and is only a factor $1-\frac{1}{e}$ away from the optimal solution. 
We use the technique of \emph{pipage rounding} \cite{Ageev2004}.
Define a new program (CP) as follows: 
\begin{subequations}
\begin{alignat}{3}
\max \quad 
& & F(x) =  \textstyle\sum_{i \in \iset} \val_i & \Big( 1- \textstyle\prod_{j\in \ssetarg{i}} \Big( & & 1-\textstyle\sum_{k\in\kset}x_{jk}\Big) \Big) \label{eq:objF}\\ 
\text{subject to}  
& & \textstyle\sum_{j\in \sset} \cst_j x_{jk} 	& \leq B_k 	& & \forall k \in \kset\label{eq:BudgetF} \\
& & \textstyle\sum_{k\in \kset} x_{jk}		& \leq 1 		& & \forall j \in \sset\label{eq:max1AssignedF} \\
& & x_{jk}							& \in [0, 1]  	& & \forall j \in \sset, \  \forall k \in \kset \label{eq:01xF}
\end{alignat}
\end{subequations}

Obviously, a feasible solution $x$ for the problem (\lp) is also a feasible solution for (CP). 
In addition, the objective function values of (\lp) and (CP) are the same for every integral solution, i.e., $L(x) = F(x)$ for every integer solution $x$.\footnote{Our formulation (CP) is even slightly stronger than standard pipage formulations in the sense that $L(x) = F(x)$ if for all $j \in \sset$ it holds that $\sum_{k\in \kset}x_{jk} \in \mset{0,1}$.}
Moreover, for fractional solutions $x$ the value $F(x)$ is lower bounded by $L(x)$ as we show in the next lemma:

\begin{restatable}{lemma}{PipageLem}
\label{lem:boundFonL}
For every feasible solution $x$ of (\lp), we have that $F(x)\geq \left(1-\frac{1}{e}\right)L(x)$.
\end{restatable}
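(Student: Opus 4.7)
The plan is to prove a pointwise inequality for each item $i \in \iset$ and then sum weighted by $\val_i$. For notational convenience, set $z_j := \sum_{k \in \kset} x_{jk}$ for each $j \in \sset$; by constraint \eqref{eq:max1AssignedF} we have $z_j \in [0,1]$. Then the contribution of item $i$ to $F(x)$ is $\val_i \bigl(1 - \prod_{j \in \ssetarg{i}}(1-z_j)\bigr)$, while its contribution to $L(x)$ is $\val_i \min\{1, \sum_{j \in \ssetarg{i}} z_j\}$ (since, as noted right before the lemma, the optimal $y_i$ takes this value). So it suffices to show, for each $i$,
\[
1 - \prod_{j \in \ssetarg{i}}(1 - z_j) \;\geq\; \Bigl(1 - \tfrac{1}{e}\Bigr) \min\Bigl\{1, \textstyle\sum_{j \in \ssetarg{i}} z_j\Bigr\}.
\]

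The first step is to apply the AM--GM inequality to the nonnegative quantities $1 - z_j$, $j \in \ssetarg{i}$. Writing $s = |\ssetarg{i}|$ and $T = \sum_{j \in \ssetarg{i}} z_j$, this gives $\prod_{j \in \ssetarg{i}}(1-z_j) \le (1 - T/s)^s$. Combined with the elementary bound $(1 - T/s)^s \le e^{-T}$ (which holds since $1 - x \le e^{-x}$ for $x \in [0,1]$), I obtain
\[
1 - \prod_{j \in \ssetarg{i}}(1 - z_j) \;\geq\; 1 - e^{-T}.
\]

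Now I split into cases on $T$. If $T \geq 1$, then $\min\{1,T\} = 1$ and $1 - e^{-T} \geq 1 - 1/e$, so the desired inequality is immediate. If $T \in [0,1]$, then $\min\{1,T\} = T$ and I need $1 - e^{-T} \geq (1 - 1/e) T$. This is a standard one-variable inequality: the function $g(T) = 1 - e^{-T} - (1 - 1/e) T$ satisfies $g(0) = 0$, $g(1) = 0$, and $g''(T) = -e^{-T} < 0$, so $g$ is concave on $[0,1]$ and hence nonnegative there. Multiplying the pointwise inequality by $\val_i \ge 0$ and summing over $i \in \iset$ yields $F(x) \ge (1 - 1/e) L(x)$.

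The main obstacle, and really the only nontrivial step, is the correct reduction from the product form to a single-variable inequality. The AM--GM step is what makes the bound depend only on $T$ (rather than on the individual $z_j$'s), and it is essential that $T$ can range over $[0, s]$ (not just $[0,1]$), which is why the case split is needed. Everything else is standard calculus on $[0,1]$.
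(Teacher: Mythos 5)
Your proof is correct and follows essentially the same route as the paper's: a pointwise per-item bound obtained via AM--GM, followed by a one-variable concavity argument and a case split on whether $T=\sum_{j\in\ssetarg{i}}z_j$ exceeds $1$. The only cosmetic difference is that you pass to $e^{-T}$ immediately via $1-x\le e^{-x}$ and then verify $1-e^{-T}\ge(1-\frac{1}{e})T$ on $[0,1]$ explicitly, whereas the paper keeps the finite form $1-(1-T/s)^s\ge\bigl(1-(1-\frac{1}{s})^s\bigr)\min\{1,T\}$ and only invokes $(1-\frac{1}{s})^s\le\frac{1}{e}$ at the very end.
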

\begin{proof}
Let $x$ be a feasible solution for (\lp). 
Fix an item $i \in \iset$ and let $s = |\sset(i)|$ be the number of sets containing $i$. We obtain 
\begin{align*}
& 1- \prod_{j\in\ssetarg{i}}\bigg(1-\sum_{k\in\kset}x_{jk}\bigg) 
 \geq 1 - \bigg( \frac{1}{s} \sum_{j \in \ssetarg{i}} \bigg(1-\sum_{k\in\kset}x_{jk}\bigg) \bigg)^{s}\\
& \quad \ge 1-  \bigg( 1- \frac{1}{s} \sum_{j\in \ssetarg{i}} \sum_{k \in \kset}x_{jk}\bigg)^{s} \geq \left(1 - \left(1 - \frac{1}{s}\right)^{s}\right)\min \bigg\{1, \sum_{j \in \ssetarg{i}}\sum_{k \in \kset}x_{jk}\bigg\} \\
& \quad \ge \left(1 - \frac{1}{e}\right)y_i. 
\end{align*}
The first inequality follows from the arithmetic/geometric mean inequality (see, e.g., \cite{Goemans1994}) which states that for any $n$ non-negative numbers $a_1,a_2,\dots,a_n$, we have 
$\textstyle\prod_{i=1}^n a_i \le (\frac{1}{n} \sum_{i = 1}^n a_i)^n$. 
The final inequality follows from the fact that  for every $n\geq 1$ it holds that $\left(1-\frac{1}{n}\right)^n \le \frac{1}{e}$. 

Using the above, we can conclude that 
\[
F(x) =
\sum_{i \in \iset} \val_i \bigg( 1- \prod_{j\in \ssetarg{i}} \bigg( 1 - \sum_{k\in\kset} x_{jk}\bigg) \bigg) \geq  \sum_{i \in \iset} \val_i \left( 1 - \frac{1}{e} \right) y_i 
= \bigg(1 - \frac{1}{e}\bigg)L(x).
\]
\vneg
\end{proof}

In Theorem \ref{th:boundedsplit} below, we show that we can transform an optimal LP-solution $x^*_{\lp}$ into a solution $x$ that satisfies the bounded split property without decreasing the objective value of~(CP).

\begin{theorem}
\label{th:boundedsplit}
There exists a feasible solution $x$ of (\lp) that satisfies the bounded split property and for which $F(x) \geq F(x^*_{\lp})$.
\end{theorem}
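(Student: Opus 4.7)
The plan is to apply pipage rounding iteratively to $x_{\lp}^*$. Starting from $x^{(0)} = x_{\lp}^*$, I would produce a sequence of feasible solutions $x^{(0)}, x^{(1)}, \dots$ with $F(x^{(t+1)}) \geq F(x^{(t)})$, such that each support graph $H_{x^{(t)}}$ makes strict progress toward admitting an $\sset$-saturating matching. Each step picks two fractional coordinates $x_{j_1, k_1}, x_{j_2, k_2}$ of the current $x$ and shifts $x \mapsto x + \delta v$ along the direction $v$ supported on these two coordinates, with $v_{j_1, k_1} = +\alpha$ and $v_{j_2, k_2} = -\beta$ for some $\alpha, \beta > 0$ chosen so that every tight (\lp)-constraint touching both coordinates is preserved. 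For instance, when the two coordinates share a common knapsack $k$, I would take $\alpha = c_{j_2}$ and $\beta = c_{j_1}$ so that the load $\sum_j c_j x_{jk}$ is unchanged.

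The multilinearity of $F$ gives $\partial^2 F / \partial x_a^2 = 0$ for every coordinate $a$, and a direct differentiation of \eqref{eq:objF} yields
\[
\frac{\partial^2 F}{\partial x_{j_1,k_1}\,\partial x_{j_2,k_2}} = -\sum_{i : j_1, j_2 \in \ssetarg{i}} \val_i \prod_{j' \in \ssetarg{i} \setminus \{j_1, j_2\}} \Big(1 - \textstyle\sum_{k'} x_{j', k'}\Big) \leq 0
\]
when $j_1 \neq j_2$ (and equals $0$ when $j_1 = j_2$). Consequently $\tfrac{d^2}{d\delta^2} F(x + \delta v) = -2\alpha\beta \cdot \partial^2 F/\partial x_{j_1,k_1}\partial x_{j_2,k_2} \geq 0$, so $\delta \mapsto F(x + \delta v)$ is convex. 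Its maximum on the feasible interval $[\delta_-, \delta_+]$ is therefore attained at an endpoint, and moving $x$ to that endpoint preserves $F$ while either integerizing one of the two coordinates or tightening a previously slack constraint, yielding strict progress on a natural potential on $H_x$.

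To guarantee that such a pair of fractional coordinates always exists as long as the bounded split property fails, I would invoke Hall's theorem: the absence of an $\sset$-saturating matching yields a deficient set $S' \subseteq \sset$ with $|N_{H_x}(S')| < |S'|$, whereupon a pigeonhole argument produces some $\kset$-node $k \in N_{H_x}(S')$ carrying fractional edges to two distinct subsets $j_1, j_2 \in S'$. The main technical obstacle is the case in which the assignment-sum constraints $\sum_{k'} x_{j, k'} \leq 1$ are already tight at $j_1$ or $j_2$, so that only one of $\pm v$ is feasible and the convexity of $F$ does not on its own guarantee an improvement; addressing this requires either swapping the roles of $j_1$ and $j_2$ or extending $v$ to a longer alternating walk through $H_x$ (following additional fractional edges at $j_1, j_2$ into other currently-slack knapsacks), while verifying that the resulting direction is still one along which $F$ is convex. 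The iteration terminates in finitely many steps because each step strictly reduces the number of fractional edges or the number of slack LP-constraints; at termination, $H_x$ admits the required matching and $F(x) \geq F(x_{\lp}^*)$ by construction.
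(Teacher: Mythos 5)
Your core analytic idea matches the paper's: $F$ has nonpositive mixed second derivatives and vanishing pure second derivatives, so $F$ is convex along any two-coordinate (or, more generally, alternating) perturbation direction, and moving to an endpoint of the feasible interval cannot decrease $F$ while eliminating a fractional edge or tightening a constraint. That part of your computation is correct. However, the combinatorial half of the argument --- which perturbation directions are always available, and why the process terminates in a solution with an $\sset$-saturating matching --- is exactly where your proposal stops short, and it is the crux of the proof. The case you flag as ``the main technical obstacle'' (both assignment constraints $\sum_{k'}x_{j_1,k'}\le 1$ and $\sum_{k'}x_{j_2,k'}\le 1$ tight, so neither $+v$ nor $-v$ is feasible) is not a corner case to be patched; it is the generic situation, and resolving it by ``extending $v$ to a longer alternating walk while verifying convexity'' is precisely the content you would need to supply. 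The paper does this by organizing all perturbations along structures of the support graph $H_x$: it first removes cycles via an alternating shift that preserves $L$ (and hence feasibility and optimality) exactly, then eliminates maximal $\sset$-$\sset$-paths via the convexity argument --- crucially, the path endpoints have degree one in $H_x$, so for every item the product in \eqref{eq:objF} depends on $\varepsilon$ through at most the two endpoint factors, yielding a convex quadratic, while all interior $\sset$-nodes and $\kset$-nodes have their sums and loads preserved exactly by the alternating signs. Your freely chosen pair $(j_1,k),(j_2,k)$ from a Hall-deficient set has neither of these guarantees.

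Two further gaps. First, your termination potential is not monotone: a step that preserves the load on the shared knapsack $k$ still decreases the load on any \emph{other} knapsack incident to the decreased coordinate, so a previously tight constraint elsewhere can become slack, and ``number of fractional edges or number of slack constraints'' can fail to decrease lexicographically; the paper's potential is simply the number of fractional edges, which strictly decreases at every cycle-removal and path-pipage step. Second, and most importantly, you never prove that the terminal configuration satisfies the bounded split property. You show that failure of the bounded split property implies the existence of a fractional pair, but what you need is that it implies the existence of a \emph{feasible improving move}; absent that, the process could stall at a non-saturating solution. The paper avoids this implication entirely by proving a structural statement (Lemma~\ref{lem:xSatisfiesBoundedSplit}): any acyclic support graph with no $\sset$-$\sset$-path admits an $\sset$-saturating matching, built tree by tree by rooting each tree at a $\kset$-node and matching the unique $\sset$-leaf's path to parents and every other $\sset$-node to a child. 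To complete your proof you would need either this structural endpoint characterization or a direct argument that Hall deficiency always yields a feasible alternating direction; neither is present.
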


Let $x$ be a feasible solution to (\lp) and let $H_x = (\sset \cup \kset, E_x)$ be the support graph of $x$.
Consider a (maximal) path $P = \seq{u_1, \dots, u_t}$ in $H_x$ that starts and ends with a node of degree one. 
We call $P$ an \emph{$\sset$-$\sset$-path} if $u_1, u_t \in \sset$, an \emph{$\sset$-$\kset$-path} if $u_1 \in \sset$ and $u_t \in \kset$, and a \emph{$\kset$-$\kset$-path} if $u_1, u_t \in \kset$.

An outline of the proof of Theorem~\ref{th:boundedsplit} is as follows: 
First, we show that there exists an optimal solution $x^* = x^*_{\lp}$ for (\lp) such that the support graph $H_{x^*}$ is acyclic (Lemma~\ref{lem:noCycles}). 
Second, we prove that from $x^*$ we can derive a solution $x'$ whose support graph $H_{x'}$ does not contain any $\sset$-$\sset$-paths such that the objective function value of (CP) does not decrease (Lemma \ref{lem:propUUpaths}). 
Finally, we show that this solution $x'$ satisfies the bounded split property (Lemma~\ref{lem:xSatisfiesBoundedSplit}).
Combining these lemmas proves Theorem~\ref{th:boundedsplit}.

\begin{restatable}{lemma}{AcyclicLem}
\label{lem:noCycles}
There is an optimal solution $x^* = x^*_{\lp}$ of (\lp) whose support graph $H_{x^*}$ is acyclic. 
\end{restatable}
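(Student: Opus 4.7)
The plan is a standard cycle-cancellation argument on the bipartite support graph. Start from any optimal solution $x^*$ of~(\lp); if $H_{x^*}$ is already acyclic, there is nothing to prove, so assume it contains a cycle. Since $H_{x^*}$ is bipartite with parts $\sset$ and $\kset$, every cycle has even length; pick one, say $C = \seq{j_1, k_1, j_2, k_2, \ldots, j_r, k_r, j_1}$. The idea is to design a two-sided perturbation along $C$ that preserves both feasibility and the objective, and then push it until an edge of $C$ drops out of the support.

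Next I would define a real-valued perturbation $\delta(\epsilon)$ supported exactly on the edges of $C$, with $\delta_{(j_i, k_i)} = +\epsilon/\cst_{j_i}$ and $\delta_{(j_{i+1}, k_i)} = -\epsilon/\cst_{j_{i+1}}$ (indices modulo $r$). The scaling by $1/\cst_{j_i}$ is chosen precisely so that two cancellations occur: at every knapsack $k_i$, the induced change in $\sum_{j}\cst_j x_{jk_i}$ is $\cst_{j_i}(\epsilon/\cst_{j_i}) - \cst_{j_{i+1}}(\epsilon/\cst_{j_{i+1}}) = 0$, and at every subset $j_i$ the two incident edges of $C$ contribute $+\epsilon/\cst_{j_i}$ and $-\epsilon/\cst_{j_i}$, summing to $0$. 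This keeps the knapsack constraints~\eqref{eq:BudgetILP} and the assignment constraints~\eqref{eq:max1assigned} satisfied at exactly the same levels, and in particular leaves each $\sum_k x_{jk}$ fixed, so we can leave every $y_i$ unchanged and thus preserve~\eqref{eq:XYcoupleILP} and the objective $L$. Because every cycle edge has $0 < x^*_{jk} < 1$, both $x^* + \delta(\epsilon)$ and $x^* - \delta(\epsilon)$ remain feasible (and still optimal) for all sufficiently small $|\epsilon|$.

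The final step is to push $\epsilon$ in one of the two directions as far as possible until some cycle edge first reaches $0$ or $1$. At that threshold the corresponding $x_{jk}$ exits the open interval $(0,1)$ and is therefore removed from $H_x$, reducing $|E_x|$ by at least one while preserving optimality. Iterating this cycle-cancellation step at most $|E_x|$ times produces an optimal solution whose support graph is acyclic. I do not expect a substantial obstacle: the technique is standard, and the only subtlety is the presence of the $y$-variables and the coupling constraint~\eqref{eq:XYcoupleILP}, but the choice of scaling above is engineered precisely so that each $\sum_k x_{jk}$ is preserved, which makes the argument go through without having to re-optimize $y$. The implicit assignment restrictions ($x_{jk}$ defined only when $\bgt_k \geq \cst_j$, cf.\ the footnote on~(\lp)) are automatically respected, because the perturbation only touches edges already in the support of $x^*$.
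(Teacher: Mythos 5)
Your proposal is correct and is essentially the paper's own argument: the paper likewise decomposes the even cycle into two alternating matchings, perturbs by $\pm\varepsilon/\cst_j$ so that both the per-knapsack cost totals and the per-set sums $\sum_k x_{jk}$ are preserved, and pushes $\varepsilon$ until a critical edge becomes integral and leaves the support graph. No gaps.
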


\begin{proof}
Let $x$ be an optimal solution for (\lp) and suppose the support graph $H_{x} = (\sset \cup \kset, E_x)$ of $x$ contains a cycle $C = \seq{u_1, \dots, u_{t}, u_1}$ (visiting nodes $u_1, \dots, u_t, u_1$). 
Note that the length (number of edges) of $C$ is even because $H_{x}$ is bipartite. We can therefore decompose $C$ into two matchings $M_1$ and $M_2$ with $|M_1| = |M_2|$. 
Define 
\begin{align*}
& \varepsilon := 
\min \Big\{ \min_{\mset{j, k} \in M_1} \mset{\cst_{j} x_{jk}} , \min_{\mset{j, k}\in M_2}\{\cst_j(1-x_{jk})\}\Big\}.
\end{align*}
We call each edge on $C$ for which the minimum is attained a \emph{critical edge}; note that there is at least one critical edge.
We use $\varepsilon$ to define a new solution $x(\varepsilon)$ as follows:
\begin{equation}
\label{eq:xepsilon}
x_{jk}(\varepsilon) =
\begin{cases}
x_{jk} + \frac{\varepsilon}{\cst_j} & \text{if } \mset{j, k} \in M_1\\
x_{jk} - \frac{\varepsilon}{\cst_j} & \text{if } \mset{j,k} \in M_2\\
x_{jk} & \text{otherwise.}
\end{cases}
\end{equation}
In the way $\varepsilon$ is defined, the value of each critical edge $\hat{e}$ on $C$ with respect to $x(\varepsilon)$ is integral; more specifically, $x_{\hat{e}}(\varepsilon) = 0$ if $\hat{e} \in M_1$ and $x_{\hat{e}}(\varepsilon)=1$ if $\hat{e} \in M_2$. Thus, $\hat{e}$ is not part of the support graph $H_{x(\varepsilon)}$. 
As a consequence, $H_{x(\varepsilon)}$ is a subgraph of $H_x$ that has at least one cycle less. 

It remains to show that $x(\varepsilon)$ is a feasible solution to (\lp) and has the same objective function value as $x$. 
Every node $u = u_i$, $i \in [t]$, on the cycle $C$ has two incident edges in $C$, one belonging to $M_1$ and one to $M_2$. We distinguish two cases: 

Case 1: $u = k \in \kset$. 
Let the two incident edges be $\mset{i, k} \in M_1$ and $\mset{j, k} \in M_2$. 
The combined cost of these two edges in $x(\varepsilon)$ is: 
\[
c_i x_{ik}(\varepsilon) + c_j x_{jk}(\varepsilon) 
= \cst_i \left( x_{ik} + \frac{\varepsilon }{\cst_i}\right) + \cst_j \left(x_{jk} - \frac{\varepsilon }{\cst_j}\right) 
= \cst_ix_{ik} + \cst_j x_{jk}.
\]
That is, the total cost assigned to knapsack $k$ in $x(\varepsilon)$ is the same as in $x$. We conclude that $x(\varepsilon)$ satisfies constraint \eqref{eq:BudgetILP} (because $x$ does). 

Case 2: $u = j \in \sset$. 
Let the two incident edges be $\mset{j, k} \in M_1$ and $\mset{j, l} \in M_2$. By the definition of $x(\varepsilon)$, we obtain 
\[
x_{jk}(\varepsilon) + x_{jl}(\varepsilon) 
= x_{jk} + \frac{\varepsilon }{\cst_j} + x_{jl} - \frac{\varepsilon}{\cst_j} 
= x_{jk} + x_{jl}. 
\]
In particular, this implies that $x(\varepsilon)$ satisfies constraints \eqref{eq:max1assigned} and \eqref{eq:XYcoupleILP} (because $x$ does). Further, from the equation above it also follows that the solutions $x(\varepsilon)$ and $x$ result in the same objective function value of (\lp), i.e., $L(x(\varepsilon)) = L(x)$. 

By repeating the above procedure, we eventually obtain a feasible solution $x^*$ of (\lp) such that $H_{x^*}$ is a subgraph of $H_x$ that does not contain any cycles and $L(x^*) = L(x)$, i.e., $x^*$ is an optimal solution.
\end{proof}

\begin{restatable}{lemma}{NoSSPathLem}
\label{lem:propUUpaths}
There exists a feasible solution $x'$ of (\lp) whose support graph $H_{x'}$ is acyclic and does not contain any $\sset$-$\sset$-paths, and which satisfies $F(x')\geq F(x^*_{\lp})$.
\end{restatable}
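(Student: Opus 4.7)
The plan is to apply an iterative pipage rounding step that destroys $\sset$-$\sset$-paths without introducing cycles and without decreasing the value of $F$. I would start from the acyclic LP-optimal solution $x^{(0)}$ produced by Lemma \ref{lem:noCycles}. A key observation at this very first step is that the transformation in \eqref{eq:xepsilon} preserves $\sum_k x_{jk}$ at every set $j$ (this is exactly the ``Case 2'' identity $x_{jk}(\varepsilon) + x_{jl}(\varepsilon) = x_{jk} + x_{jl}$ in that proof), so it preserves the values $y_j$ that $F$ depends on. Hence $F(x^{(0)}) = F(x^*_{\lp})$, and the task reduces to repeatedly modifying $x$ so that one $\sset$-$\sset$-path is eliminated, $F$ does not decrease, and acyclicity is retained.

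Given an $\sset$-$\sset$-path $P = \seq{j_0, k_1, j_1, \ldots, k_t, j_t}$ in $H_x$, decompose its $2t$ edges into the two canonical matchings $M_1 = \mset{\mset{j_{i-1}, k_i} : i \in [t]}$ and $M_2 = \mset{\mset{k_i, j_i} : i \in [t]}$, and define $x(\varepsilon)$ in complete analogy with \eqref{eq:xepsilon}. Exactly the same local computation as in Lemma \ref{lem:noCycles} shows that every knapsack $k_i$ on $P$ preserves its total assigned cost and every internal set $j_1,\ldots,j_{t-1}$ preserves $\sum_k x_{\cdot,k}$; only the two endpoint sets change, with $y_{j_0}(\varepsilon) = y_{j_0} + \varepsilon/c_{j_0}$ and $y_{j_t}(\varepsilon) = y_{j_t} - \varepsilon/c_{j_t}$. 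Since $j_0$ and $j_t$ have degree one in $H_x$, all their other $x$-entries are integer, and the constraint \eqref{eq:max1assigned} at the endpoints forces these integer entries to be $0$; together with the strict fractionality of all edges on $P$, this guarantees that the maximal feasibility interval for $\varepsilon$ takes the form $[-\alpha, \beta]$ with $\alpha, \beta > 0$.

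The core of the argument is to show that $F(x(\varepsilon))$ is convex in $\varepsilon$. Because only $y_{j_0}$ and $y_{j_t}$ depend on $\varepsilon$, for each item $i$ the product $\prod_{j \in \ssetarg{i}} (1 - y_j(\varepsilon))$ is a nonnegative constant times at most two affine factors, one in $\varepsilon/c_{j_0}$ and one in $\varepsilon/c_{j_t}$, whose slopes have opposite signs; when both factors are present, the product is a quadratic in $\varepsilon$ with leading coefficient $-1/(c_{j_0} c_{j_t}) < 0$, hence concave, and when fewer are present it is affine or constant. In every case $p_i \bigl(1 - \prod_j (1 - y_j(\varepsilon)) \bigr)$ is convex in $\varepsilon$, and summing over $i$ preserves convexity. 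A convex function on a closed interval attains its maximum at an endpoint, so $\max\{F(x(-\alpha)), F(x(\beta))\} \ge F(x(0)) = F(x)$, and I would replace $x$ by whichever endpoint achieves this maximum.

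At the chosen endpoint, some feasibility constraint is tight and at least one path edge becomes integral (hitting $0$ or $1$), leaving the support graph. Because $x$ only changes on edges of $P$, the new support graph is a subgraph of the old one, so acyclicity is preserved and the number of fractional edges strictly decreases. Iterating this pipage step at most $|E_x| \le m\cdot p$ times produces the desired $x'$, which is feasible for (\lp) (setting $y_i = \min\{1, \sum_{j \in \ssetarg{i}} \sum_{k \in \kset} x'_{jk}\}$), is acyclic, contains no $\sset$-$\sset$-path, and satisfies $F(x') \ge F(x^*_{\lp})$. The main obstacle is getting the shifts scaled by $1/c_j$ correctly so that \emph{all} internal constraints on $P$ are preserved; it is exactly this locality that reduces the analysis of $F$ to a one-dimensional convex function of $\varepsilon$ and makes the convexity calculation clean.
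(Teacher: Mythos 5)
Your proposal is correct and follows essentially the same route as the paper: perturb along an $\sset$-$\sset$-path via the two alternating matchings, observe that only the two degree-one endpoint sets change so that $F(x(\varepsilon))$ is itemwise convex in $\varepsilon$ (quadratic with the right sign, affine, or constant), move to the better endpoint of the feasibility interval to make a path edge integral, and iterate. The sign bookkeeping (concave product, hence convex $f_i$) and the observation that internal knapsacks and internal sets preserve their totals match the paper's argument exactly.
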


\begin{proof}
Let $x$ be an optimal solution for (\lp) whose support graph $H_x$ is acyclic; we know that such a solution exists by Lemma~\ref{lem:noCycles}. 
Suppose $H_x$ contains an $\sset$-$\sset$-path $P = \seq{u_1, \dots, u_{t}}$; recall that nodes $u_1$ and $u_t$ have degree one in $H_x$. $P$ has even length because $H_x$ is bipartite.
We can thus decompose $P$ into two matchings $M_1$ and $M_2$ with $|M_1| = |M_2|$. 
We define $x(\varepsilon)$ as in \eqref{eq:xepsilon} for every $\varepsilon\in [-\varepsilon_1,\varepsilon_2]$, where 
\begin{align*}
\varepsilon_1 & := 
\min \Big\{ \min_{\mset{j, k} \in M_1} \mset{\cst_{j} x_{jk}} , \min_{\mset{j, k}\in M_2}\{\cst_j(1-x_{jk})\}\Big\} \\
\varepsilon_2 & := 
\min \Big\{ \min_{\mset{j, k} \in M_1} \mset{\cst_{j} (1-x_{jk})} , \min_{\mset{j, k}\in M_2}\{\cst_j x_{jk}\}\Big\}.
\end{align*}

We first show that $x(\varepsilon)$ is feasible solution for (CP) for every $\varepsilon\in [-\varepsilon_1,\varepsilon_2]$. By following the same line of arguments as in the proof of Lemma~\ref{lem:noCycles}, we can show that the solution $x(\varepsilon)$ satisfies constraint (\ref{eq:BudgetF}). Further, for every $j \in \sset$, $j \neq u_1, u_t$, we can also show (as in the proof of Lemma~\ref{lem:noCycles}) that $\sum_{k \in \kset}x_{jk}(\varepsilon) = \sum_{k \in \kset}x_{jk}$ and thus constraints (\ref{eq:max1AssignedF}) and (\ref{eq:01xF}) are satisfied in $x(\varepsilon)$.
By definition, the endpoints $j = u_1, u_t$ of $P$ have degree one in $H_x$ and thus $j$ is fractionally assigned to a single knapsack in $x$. 
Consider $j = u_1$ and let $k = u_2$ be the knapsack to which $j$ is assigned in $x$. We have $x_{jk} \in (0,1)$ and, from the definition of $\varepsilon_1$ and $\varepsilon_2$, it follows that $x_{jk}(\varepsilon) \in [0, 1]$. The same argument holds for $j = u_t$. Thus, $x(\varepsilon)$ also satisfies constraints (\ref{eq:max1AssignedF}) and (\ref{eq:01xF}) for $j = u_1, u_t$ if $\varepsilon \in [-\varepsilon_1,\varepsilon_2]$. 
We conclude that $x(\varepsilon)$ is a feasible solution for (CP) if $\varepsilon \in [-\varepsilon_1,\varepsilon_2]$.

Next, we show that $F(x(-\varepsilon_1))$ or $F(x(\varepsilon_2))$ is at least as large as $F(x)$. To this aim, we show that $F(x(\varepsilon))$ as a function of $\varepsilon$ is convex; in fact, we show convexity for each item $i \in \iset$ separately. 
Observe that the first and the last edge of $P$ are in different matchings. Without loss of generality, we assume that $\mset{u_1, u_2} \in M_1$ and $\mset{u_{t-1}, u_t} \in M_2$. 
The contribution of $i$ to the objective function $F(x(\varepsilon))$ can be written as $\val_i f_i(\varepsilon)$, where
\begin{align*}
f_i(\varepsilon) 
& = \bigg(1-\prod_{j\in \ssetarg{i}}\Big(1-\sum_{k\in \kset}x_{jk}(\varepsilon)\Big)\bigg) \\
& =  \bigg(1- \prod_{j \in \ssetarg{i} \cap \mset{u_1, u_t}} \Big(1-\sum_{k\in \kset}x_{j k}(\varepsilon)\Big) 
\prod_{j\in \ssetarg{i} \setminus \mset{u_1,u_t}}\Big(1-\sum_{k\in \kset}x_{jk}\Big)\bigg).
\end{align*}
(Here, we adopt the convention that the empty product is defined to be $1$.)
Note that the latter product $\prod_{j\in \ssetarg{i} \setminus \mset{u_1, u_t}} (1-\sum_{k\in \kset}x_{jk})$ is independent of $\varepsilon$ and has a value between 0 and 1. 
We now distinguish three cases:

Case 1: $|\mset{u_1, u_t} \cap \ssetarg{i}| = 2$.
In this case, we have 
\[
\prod_{j \in \ssetarg{i} \cap \mset{u_1, u_t}} \Big(1-\sum_{k\in \kset}x_{j k}(\varepsilon)\Big) 
= 
\Big( 1-\big(x_{u_1u_2}+\frac{\varepsilon}{\cst_{u_1}}\big) \Big)
\Big( 1-\big(x_{u_t u_{t-1}}-\frac{\varepsilon}{\cst_{u_{t}}}\big) \Big) 
\]
That is, $f_i(\varepsilon)$ is a quadratic function of $\varepsilon$ and the coefficient of the quadratic term is $1/(\cst_{u_1}\cst_{u_t})$, which is positive. Thus, $f_i(\varepsilon)$ is convex. 

Case 2: $|\mset{u_1, u_t} \cap \ssetarg{i}| = 1$. 
In this case, $f_i(\varepsilon)$ is a linear function in $\varepsilon$. 

Case 3: $|\mset{u_1, u_t} \cap \ssetarg{i}| = 0$. 
In this case, $f_i(\varepsilon)$ is independent of $\varepsilon$.

We conclude that $F(x(\varepsilon))$ is convex in $\varepsilon$ and its maximum over $[-\varepsilon_1,\varepsilon_2]$ is thus attained at one of the endpoints, i.e., 
$
\max\{F(x(-\varepsilon_1)), F(x(\varepsilon_2))\} 
= \max_{\varepsilon\in [-\varepsilon_1,\varepsilon_2]}\{F(x(\varepsilon))\} \geq F(x(0)) 
= F(x).
$

As a result, we can find a feasible solution $x' \in \mset{x(-\varepsilon_1), x(\varepsilon_2)}$ with the property that $F(x')\geq F(x)$.
Further,  $x'$ has at least one fractional variable on $P$ less than $x$. Thus, $H_{x'}$ is a subgraph of $H_x$ with at least one edge of $P$ removed. By repeating this procedure, we eventually obtain a feasible solution $x'$ of (CP) whose support graph does not contain any $\sset$-$\sset$-paths, and for which $F(x')\geq F(x)$. 
\end{proof}

\begin{restatable}
{lemma}{RoundingLem}
\label{lem:xSatisfiesBoundedSplit}
Let $x'$ be a feasible solution of (\lp) whose support graph $H_{x'}$ is acyclic and does not contain any $\sset$-$\sset$-paths. Then $x'$ satisfies the bounded split property. 
\end{restatable}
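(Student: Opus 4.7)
The plan is to exploit the forest structure of $H_{x'}$ and verify Hall's marriage condition on each connected component, since that is equivalent to producing an $\sset$-saturating matching. Because $H_{x'}$ is acyclic and bipartite with parts $\sset$ and $\kset$, it is a forest. Work on one tree component $T$ at a time (isolated $\sset$-nodes of $H_{x'}$ are irrelevant to the bounded split property, so only components with at least one edge need attention). The key structural observation I want to use is: if $T$ has at least two vertices, then $T$ has at least two leaves, but by hypothesis at most one of them lies in $\sset$. Indeed, in a tree the unique path between two leaves is a maximal path whose endpoints both have degree one, so two $\sset$-leaves in the same component would form an $\sset$-$\sset$-path, contradicting the assumption.

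Next, I would verify Hall's condition $|N_T(A)| \ge |A|$ for every non-empty $A \subseteq \sset \cap V(T)$, where $B := N_T(A) \subseteq \kset \cap V(T)$. Since $T$ is bipartite between $\sset$ and $\kset$, every edge of $T$ incident to a vertex of $A$ has its other endpoint in $B$. Hence the subgraph $T[A \cup B]$ contains exactly $\sum_{j \in A} d_T(j)$ edges. As a subgraph of the tree $T$, it is a forest on $|A| + |B|$ vertices, so the number of edges is at most $|A| + |B| - 1$.

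Now I apply the leaf count. Each $j \in A$ is non-isolated in $T$ and hence has $d_T(j) \ge 1$, while by the observation above at most one vertex of $A$ can be a leaf of $T$ (since $T$ contains at most one $\sset$-leaf). Consequently $\sum_{j \in A} d_T(j) \ge 2(|A| - 1) + 1 = 2|A| - 1$, and combining with the forest bound gives $2|A| - 1 \le |A| + |B| - 1$, i.e., $|B| \ge |A|$. So Hall's condition holds in each tree component of $H_{x'}$, and by Hall's theorem each component admits a matching saturating its $\sset$-vertices; taking the union yields an $\sset$-saturating matching of $H_{x'}$, which is exactly the bounded split property.

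The main obstacle is the counting step that converts the structural hypothesis "at most one $\sset$-leaf per component" into a numerical degree lower bound $\sum_{j \in A} d_T(j) \ge 2|A| - 1$ for an arbitrary subset $A$; the rest is a routine invocation of the tree edge-count bound and Hall's theorem. Care is also needed to note that isolated $\sset$-vertices of $H_{x'}$ do not require matching, so the argument need only handle nontrivial components.
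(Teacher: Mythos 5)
Your proof is correct, but it takes a genuinely different route from the paper. The paper's proof is constructive: it roots each tree $T$ of the forest $H_{x'}$ at a node of $\kset$, observes (as you do) that $T$ has at most one leaf in $\sset$, matches every $\sset$-node on the path from that unique $\sset$-leaf to the root with its parent, and matches every other $\sset$-node with an arbitrary child (which must exist, since such a node is not the unique $\sset$-leaf). You instead verify Hall's condition on each component: the induced subgraph $T[A\cup B]$ with $B=N_T(A)$ has exactly $\sum_{j\in A}d_T(j)$ edges, is a forest so has at most $|A|+|B|-1$ edges, and the "at most one $\sset$-leaf" observation gives $\sum_{j\in A}d_T(j)\ge 2|A|-1$, whence $|B|\ge|A|$; all three steps check out, including the exactness of the edge count (both $A$ and $B$ are independent sets, so every edge of $T[A\cup B]$ meets $A$). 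What the paper's approach buys is an explicit, immediately computable matching, which is what Algorithm~1 actually consumes in Step~3; your approach is existential, though this is harmless since a saturating matching in a bipartite graph can be found in polynomial time once Hall's condition is known to hold. Your counting argument is arguably the cleaner way to see \emph{why} the no-$\sset$-$\sset$-path hypothesis is exactly what is needed, since it converts the structural hypothesis into a degree deficiency bound that is tight for stars centered at $\kset$-nodes with a single pendant $\sset$-leaf.
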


\begin{proof}
We have to prove that there exists a matching $M$ in $H_{x'}$ that saturates all (non-isolated) nodes in $\sset$. For simplicity, we can assume in this proof that all isolated nodes are removed from $H_{x'}$. 
Note that we assume that $H_{x'}$ is acyclic and thus a forest. The idea is to construct a matching $M_T$ for each tree $T$ of the forest $H_{x'}$. Our final matching $M$ is then simply the union of all these matchings, i.e., $M = \cup_T M_T$. 

Consider a tree $T$ of $H_{x'}$ and root it at an arbitrary node $r \in \kset$. By assumption, $T$ has at most one leaf in $\sset$ because otherwise there would exist an $\sset$-$\sset$-path in $H_{x'}$. 
We can construct a matching $M_T$ that matches all the nodes in $T \cap \sset$ as follows:\footnote{We slightly abuse notation here by letting $T$ refer to the tree and the set of nodes it spans.} 
If there is a (unique) $\sset$-leaf, say $j \in \sset$, then we match each $\sset$-node on the path from $j$ to the root $r$ to its unique parent in $T$. Each remaining $\sset$-node in $T$ is matched to one of its children (chosen arbitrarily); there always is at least one child as $j$ is the only $\sset$-leaf in $T$. Note that this defines a matching $M_T$ that matches all nodes in $T \cap \sset$ as desired. 

\end{proof}

\begin{algorithm2e}[t]
\caption{$\frac{1}{2}(1-\frac{1}{e})$-approximation algorithm for MCPK.}
\label{alg:apxAlg}
Compute an optimal solution $x^*$ to (\lp). \;
Derive a solution $x$ from $x^*$ satisfying the bounded split property (Theorem~\ref{th:boundedsplit}). \;
Let $M$ be the corresponding $\sset$-saturating matching. \;
Decompose the fractional solution $x$ into $x^1, x^2$ as follows:
\vspace*{-1.5ex}
\[
x^1_{jk} = 
\begin{cases}
x^1_{jk} = 1 & \text{if $x_{jk} = 1$} \\
x^1_{jk} = 0 & \text{otherwise} 
\end{cases} 
\quad\;\;\text{and}\quad\;\;
x^2_{jk}  = 
\begin{cases}
x^2_{jk} = 1 & \text{if $x_{jk} \in (0,1)$, $\mset{j,k} \in M$} \\
x^2_{jk} = 0 & \text{otherwise}.
\end{cases} 
\] 
\vspace*{-4.5ex}
\;
Output $x_{\alg} \in \arg\max \mset{L(x^1), L(x^2)}$ \;

\end{algorithm2e}

In light of Theorem~\ref{th:boundedsplit}, our approximation algorithm (Algorithm \ref{alg:apxAlg}) simply chooses the better of the integral and the rounded solution obtained from an optimal LP-solution.

\begin{restatable}{theorem}{ApxThmMCPK}
\label{th:apxalg}
Algorithm \ref{alg:apxAlg} is a $\frac{1}{2}\left(1-\frac{1}{e}\right)$-approximation algorithm for MCPK. 
\end{restatable}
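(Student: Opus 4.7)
The plan is to show (i) both $x^1$ and $x^2$ produced by Algorithm~\ref{alg:apxAlg} are feasible integral solutions of~(\ip), and (ii) their combined profit already dominates $F(x)$, so the better of the two recovers at least $F(x)/2$. Chaining this with Lemma~\ref{lem:boundFonL} and Theorem~\ref{th:boundedsplit} yields the claimed ratio against the LP optimum, which upper-bounds $\opt$.

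First I would check feasibility. For $x^1$, only variables with $x_{jk}=1$ are retained, so constraints \eqref{eq:BudgetILP} and \eqref{eq:max1assigned} are inherited directly from $x$ (and the implicit assignment restriction from the footnote holds because $x_{jk}$ was a valid LP variable). For $x^2$, the fact that $M$ is a matching of $H_x$ guarantees that each set appears in at most one edge of $M$ and each knapsack in at most one edge of $M$; hence \eqref{eq:max1assigned} holds, and every knapsack $k$ receives at most one set $j$, whose cost $\cst_j$ is at most $\bgt_k$ by the implicit restriction that $x_{jk}$ exists in~(\lp), yielding \eqref{eq:BudgetILP}.

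The key step is to establish $L(x^1)+L(x^2) \ge F(x)$. I would partition the sets into $\sset_I := \msset{j \in \sset}{\exists k: x_{jk}=1}$, $\sset_F := \msset{j \in \sset}{0 < \sum_k x_{jk} \le 1 \text{ and no } x_{jk}=1}$, and $\sset_0$ (unassigned). By construction $x^1$ assigns exactly the sets in $\sset_I$, while $x^2$ assigns exactly the sets in $\sset_F$ (via the $\sset$-saturating matching $M$). Then for each item $i \in \iset$ I would analyse three cases: if $\ssetarg{i} \cap \sset_I \ne \emptyset$ the corresponding factor $1-\sum_k x_{jk}$ in $F(x)$ vanishes so item $i$ contributes at most $\val_i$ to $F(x)$ and exactly $\val_i$ to $L(x^1)$; if instead $\ssetarg{i} \cap \sset_I = \emptyset$ but $\ssetarg{i} \cap \sset_F \ne \emptyset$ the contribution to $L(x^2)$ is $\val_i$, which again dominates the item's contribution to $F(x)$; and if $\ssetarg{i}\cap(\sset_I\cup\sset_F)=\emptyset$ then every $\sum_k x_{jk}=0$ for $j\in\ssetarg{i}$, so item $i$ contributes $0$ to all three quantities. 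Summing over $i$ gives the desired inequality.

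To conclude, let $x^*_{\ip}$ be an optimal integer solution with value $\opt$, and $x^*_{\lp}$ an optimal LP solution. Since (\lp) is a relaxation of (\ip) and $F$ and $L$ agree on integral solutions, $F(x^*_{\lp}) \ge F(x^*_{\ip}) = L(x^*_{\ip}) = \opt$ can be replaced (more directly) by $L(x^*_{\lp}) \ge \opt$. Combining Theorem~\ref{th:boundedsplit} and Lemma~\ref{lem:boundFonL} with our decomposition,
\[
2 \max\mset{L(x^1), L(x^2)} \;\ge\; L(x^1)+L(x^2) \;\ge\; F(x) \;\ge\; F(x^*_{\lp}) \;\ge\; \bigl(1-\tfrac{1}{e}\bigr)L(x^*_{\lp}) \;\ge\; \bigl(1-\tfrac{1}{e}\bigr)\opt.
\]
Dividing by $2$ gives the $\tfrac{1}{2}(1-\tfrac{1}{e})$ guarantee, and polynomial running time follows because solving (\lp), carrying out the transformations of Theorem~\ref{th:boundedsplit} (each cycle- or $\sset$-$\sset$-path elimination integralises at least one edge), and computing $M$ on a forest are all polynomial. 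The main obstacle is the case analysis establishing $L(x^1)+L(x^2)\ge F(x)$; the subtle point is to verify that $\sset_I$ and $\sset_F$ are truly disjoint (which follows because $\sum_k x_{jk} \le 1$ forces all other $x_{jk'}=0$ whenever some $x_{jk}=1$) and that $M$ hits every $j\in\sset_F$ (which is exactly the bounded split property).
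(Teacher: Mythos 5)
Your proof is correct and follows essentially the same route as the paper: feasibility of $x^1$ and $x^2$ via the matching and the implicit restriction $\cst_j \le \bgt_k$, the key inequality $L(x^1)+L(x^2)\ge F(x)$, and then the chain through Theorem~\ref{th:boundedsplit}, Lemma~\ref{lem:boundFonL}, and the LP relaxation bound on $\opt$. The only (minor) difference is that you establish the middle inequality by an explicit item-by-item case analysis over $\sset_I$, $\sset_F$, $\sset_0$, whereas the paper argues it as $F(x^1)+F(x^2)\ge F(x)$ using monotonicity of $F$ together with $L=F$ on integral points — your version is, if anything, the more carefully spelled-out of the two.
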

\begin{proof}
Note that all the procedures described in the proof of Theorem~\ref{th:boundedsplit} to derive a solution $x$ from $x^*$ can be implemented to take polynomial time. The running time of Algorithm \ref{alg:apxAlg} is thus polynomial. 

The solution $x^1$ constructed by the algorithm corresponds to the integral part of the solution $x$ satisfying the bounded split property. Clearly, this is a feasible solution to MCPK. Further, the solution $x^2$ is derived from the fractional part of $x$ using the $\sset$-saturating matching $M$. Note that by construction (see also Footnote~\ref{rem:implicit-restrictions}) each set $j \in \sset$ is matched to some $k \in \kset$ with $\bgt_k \ge \cst_j$. Thus, $x^2$ is a feasible integral solution. 

It remains to bound the approximation factor of the algorithm. We have 
\begin{align*}
L(x_{\alg}) 
& = \max\{L(x^1),L(x^2)\}
  \geq \frac{1}{2}(L(x^1)+L(x^2)) 
  = \frac{1}{2}(F(x^1)+F(x^2)) 
 \geq \frac{1}{2}F(x) \\
 & \geq \frac{1}{2}F(x^*) 
 \geq \frac{1}{2}\left(1-\frac{1}{e}\right)L(x^*)
 \geq \frac{1}{2}\left(1-\frac{1}{e}\right) \opt.
\end{align*}
Here, the second equality follows from the fact that the objective function values of (\lp) and (CP) are the same for integer solutions (as observed above). 
To see that the second inequality holds, note that by the definitions of $x^1$ and $x^2$ we have 
\[
\sum_{j \in\sset}\sum_{k \in \kset}x_{jk} \leq \sum_{j \in\sset}\sum_{k \in \kset} (x^1_{jk}+x^2_{jk})
\]
and the function $F(x)$ is non-decreasing in $x$. 
The third inequality follows from Lemma~\ref{lem:propUUpaths} and 
the fourth inequality holds because of Lemma~\ref{lem:boundFonL}. 
The final inequality holds because (\lp) is a relaxation of the integer program (\ip) of MCPK.
\end{proof}

\section{Maximum Coverage Problem with Cluster Constraints}
\label{ss:MCPC}

We derive an approximation algorithm for the Maximum Coverage Problem with Cluster Constraints (\probl). Our algorithm exploits the existence of an LP-based approximation algorithm for the problem without cluster constraints. In particular, we use our algorithm for MCPK derived in the previous section as a subroutine to obtain a $\frac13 (1- \frac{1}{e})$-approximation algorithm for \probl.

A key element of our approach is to integrate the cluster capacities into the IP formulation of MCPK by introducing a variable $z_{kl}$ for every cluster $l \in \cset$ and every knapsack $k \in \ksetarg{l}$, which specifies the fraction of the cluster capacity $\ccap_l$ that is assigned to knapsack $k$. 
We obtain the following MIP formulation for \probl: 
\begin{subequations}
\begin{alignat}{3}
\max \quad 
& & L(x,z) = \sum_{i \in \iset} \val_iy_i 						&				& & 									\label{eq:objILPMCPC} \\
\text{subject to} \quad 
& & \eqref{eq:BudgetILP}\! & - \!\eqref{eq:intILP} & &  \\
& & \textstyle \sum_{j\in \sset} \cst_jx_{jk} 					& \leq \ccap_lz_{kl} \quad 	& & \forall l \in \cset,\ \forall k \in \ksetarg{l}		\label{eq:BudgetKnapsackFromClusterILPMCPC} \\
& & \textstyle \sum_{k\in \ksetarg{l}}z_{kl}					& \leq 1 			& & \forall l \in \cset 						\label{eq:FractionKnapsackFromClusterILPMCPC} \\
& & z_{kl} 											& \geq 0 			& & \forall l \in \cset,\ \forall k \in \ksetarg{l}	 \label{eq:posZMCPC}
\end{alignat}
\end{subequations}

Note that constraints \eqref{eq:BudgetKnapsackFromClusterILPMCPC}--\eqref{eq:FractionKnapsackFromClusterILPMCPC} ensure that the capacity of cluster $l \in \cset$ is not exceeded. 
The advantage of this formulation is that, once we know the optimal values of the $z_{kl}$-variables, the remaining problem basically reduces to an instance of MCPK (though a subtle point remains, as explained below). 
We use (\ip) and (\lp) to refer to the integer formulation above and its relaxation, respectively. Let $\opt$ refer to the objective function value of an optimal solution to \probl.
Note that also for this formulation, the remarks that are given in Footnote~\ref{rem:implicit-restrictions} apply.
Similar to MCPK, for every feasible fixing of the $x_{jk}, z_{kl}$-variables, 
the optimal $y_i$-variables can be determined as before. 
As a consequence, an optimal solution $(x, y, z)$ of the above program is fully specified by $(x, z)$. 

It will be convenient to assume that the knapsacks $\ksetarg{l} = \mset{1, \dots, \nkincarg{l}}$ of each cluster $l \in \cset$ are ordered by non-increasing capacities (breaking ties arbitrarily), i.e., if $k, k' \in \ksetarg{l}$ with $k < k'$ then $\bgt_k \ge \bgt_{k'}$. 
The following notion will be crucial: 
The knapsack $\crit{l} \in \ksetarg{l}$ which satisfies $\sum_{k=1}^{\crit{l}-1}\bgt_{k}\leq \ccap_l <  \sum_{k=1}^{\crit{l}}\bgt_{k}$ is called the \emph{critical knapsack} of cluster $l$.\footnote{Note that there always exists such a critical knapsack by the assumption that $\sum_{k\in \ksetarg{l}}\bgt_k>\ccap_l$.}

We first show that the $z$-variables of an optimal LP-solution admit a specific structure. Intuitively, the lemma states that the cluster capacity $\ccap_l$ of each cluster $l$ is shared maximally among the first $\crit{l}-1$ knapsacks in $\ksetarg{l}$ and the remaining capacity is assigned to the critical knapsack $\crit{l}$.

\begin{restatable}{lemma}{OptZLem}
\label{lem:optZ}
There is an optimal solution $(x^*, z^*)$ of (\lp) such that for every cluster $l \in \cset$,
$z^*_{kl} = {\bgt_k}/{\ccap_l}\ $ for $k < \crit{l}$, 
$z^*_{kl} = 1- \sum_{t = 1}^{\crit{l} - 1}z^*_{tl}\ $ for $k = \crit{l}$ and 
$z^*_{kl} = 0$ otherwise. 
\end{restatable}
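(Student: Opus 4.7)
The plan is to start from an arbitrary optimal LP solution $(\hat x, \hat z)$ and reshape it cluster by cluster into a solution $(x^*, z^*)$ of the prescribed form, preserving feasibility and the objective value throughout. The guiding observation is that the objective \eqref{eq:objILPMCPC} depends on $x$ only through the per-set totals $\sum_{k \in \kset} x_{jk}$, because for any fixed $x$ the $y$-variables are optimally set to $y_i = \min\{1, \sum_{j\in\ssetarg{i}}\sum_{k\in\kset} x_{jk}\}$. Consequently, moving fractional mass of a fixed set $j$ between knapsacks within a single cluster leaves both the objective and constraint \eqref{eq:max1assigned} untouched.

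For each cluster $l\in\cset$ I would then apply a water-filling redistribution of the cluster's total mass $W_l := \sum_{k\in\ksetarg{l}}\sum_{j\in\sset}\cst_j \hat x_{jk} \le \ccap_l$: fill knapsacks in the order $1,2,\dots$, placing exactly $\bgt_k$ mass into each $k<\crit{l}$ and the residual (at most $\ccap_l - \sum_{t<\crit{l}}\bgt_t$) into $\crit{l}$. Since $\ksetarg{l}$ is ordered by non-increasing capacity, any fractional share of a set $j$ currently packed in a later knapsack $k'$ can be moved to an earlier knapsack $k''<k'$: indeed $\bgt_{k''} \ge \bgt_{k'} \ge \cst_j$, so the implicit assignment restriction of Footnote~\ref{rem:implicit-restrictions} is preserved. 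The defining inequality $\sum_{t<\crit{l}}\bgt_t \le \ccap_l$ of the critical knapsack, together with $W_l \le \ccap_l < \sum_{t\le\crit{l}}\bgt_t$, guarantees that the procedure empties every knapsack $k>\crit{l}$ while respecting the individual capacities of knapsacks $1,\dots,\crit{l}$.

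Once $x^*$ is in hand, I would set $z^*$ exactly as stated in the lemma and verify feasibility directly. Constraint \eqref{eq:BudgetKnapsackFromClusterILPMCPC} at $k<\crit{l}$ reduces to $\sum_{j\in\sset} \cst_j x^*_{jk}\le \bgt_k$; at $k=\crit{l}$ to $\sum_{j\in\sset} \cst_j x^*_{j\crit{l}} \le \ccap_l-\sum_{t<\crit{l}}\bgt_t$; and at $k>\crit{l}$ to $\sum_{j\in\sset} \cst_j x^*_{jk}\le 0$; all three are enforced by the water-filling. Constraint \eqref{eq:FractionKnapsackFromClusterILPMCPC} holds with equality by a telescoping sum, and $z^*_{kl}\ge 0$ follows from $\sum_{t<\crit{l}}\bgt_t < \ccap_l$. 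Finally, because the per-set totals $\sum_{k\in\kset} x^*_{jk} = \sum_{k\in\kset} \hat x_{jk}$ are untouched for every $j$, the objective value is preserved and $(x^*,z^*)$ remains optimal for (\lp).

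The main obstacle is turning the abstract water-filling into an honest reassignment of the fractional $x_{jk}$-values. Concretely, one has to describe a finite procedure that, whenever a knapsack $k'$ carries more mass than its target, picks a set $j$ with $\hat x_{jk'}>0$ and an earlier knapsack $k''\in\ksetarg{l}$ with residual target capacity, and transfers a positive quantity of mass by decreasing $x_{jk'}$ and increasing $x_{jk''}$ by the same amount. The slack bound $\sum_{k\le\crit{l}}\bgt_k - W_l > 0$ together with the capacity monotonicity along $\ksetarg{l}$ ensure that such a step is always available until the desired load profile is reached, which is exactly the technical point on which the argument hinges.
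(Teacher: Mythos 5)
Your proposal is correct and follows essentially the same route as the paper: both reshape an optimal solution cluster by cluster, shifting fractional mass of sets from later (smaller) knapsacks to earlier (larger) ones, using the non-increasing capacity ordering to guarantee each moved set still fits, and noting that per-set totals --- and hence the $y$-variables and the objective --- are unchanged. The paper's explicit $\Delta(k,k')$-transfer between the first deficient and last excessive knapsack is precisely the concrete realization of your water-filling step, so the ``main obstacle'' you flag is resolved exactly as you anticipate.
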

\begin{proof}
First of all, we show that $z^*$ as defined above is feasible (i.e., satisfies \eqref{eq:BudgetKnapsackFromClusterILPMCPC}, \eqref{eq:FractionKnapsackFromClusterILPMCPC} and \eqref{eq:posZMCPC}). 
Fix some cluster $l \in \cset$. Clearly, for all $k \neq \crit{l}$ we have $z^*_{kl} \ge 0$ and by definition $\sum_{k=1}^{\nkincarg{l}} z_{kl}^*=1$. It remains to show that $z_{\crit{l}l}\geq 0$ or, equivalently, $\sum_{k = 1}^{\crit{l}-1} z_{kl}^*\leq 1$. The latter follows by exploiting the definition above and the fact that $\crit{l}$ is the critical knapsack of $l$ and thus $\sum_{k = 1}^{\crit{l} - 1} z_{kl}^*  = \sum_{k = 1}^{\crit{l} - 1} {\bgt_k}/{\ccap_l} \leq 1$. 

Next, we show that any optimal solution $(x, z)$ can be transformed into an optimal solution $(x^*, z^*)$, where $z^*$ is defined as above. We argue cluster by cluster. Fix some cluster $l \in \cset$ and let $\ksetarg{l} = \mset{1, \dots, \nkincarg{l}}$ be the (ordered) set of knapsacks.
{First of all, we assume without loss of generality that there is no knapsack $k$ for which $z_{kl}>{\bgt_k}/{\ccap_l}$ because the cost assigned to knapsack $k$ cannot exceed $\bgt_k$. Furthermore, we assume that $\sum_{k\in \ksetarg{l}} z_{kl}=1$ because we know that a solution $(x,z)$ for which $\sum_{k\in \ksetarg{l}} z_{kl}<1$ cannot be an optimal solution.}
Let $k \in \ksetarg{l}$ be the first knapsack (in this order) satisfying $z_{kl} < z^*_{kl}$. (If no such knapsack exists, we are done because $z_{kl} \ge z^*_{kl}$ implies $z_{kl} = z^*_{kl}$ for all $k$ by the definition of $z^*$.) Let $k' > k$ be the last knapsack satisfying $k' > k$ with $z_{k'l} > z^*_{k'l}$. 
{We know that such knapsack exists because $\sum_{k \in \ksetarg{l}}z_{kl} = 1$ and any knapsack $k'<k$ has the property that $z^*_{kl} = {\bgt_k}/{\ccap_l}$. Hence, there has to be a knapsack $k'\in \ksetarg{l}$ with $z_{k'l} > z^*_{k'l}$ and that knapsack $k'$ has to have a larger index than $k$.}

Let $\Delta(k,k') = \min\{(z^*_{kl} - z_{kl})\bgt_k, (z_{k'l} - z^*_{k'l}) \bgt_{k'}\}$. Given that the cost $\cst_j$ of each set $j \in \sset$ is independent of the knapsack to which it is assigned to and $\bgt_{k'} \ge \bgt_k$ by our ordering, we can reassign a total contribution (in terms of cost) of $\Delta(k,k')$ units from knapsack $k'$ to $k$ by changing some $x_{jk},x_{jk'}$-variables accordingly. Note that this shift is feasible because every set $j \in \sset$ that is fractionally assigned to knapsack $k'$ also fits on knapsack $k$ (by our ordering). Further, this shift does not change any of the $y$-variables and thus the objective function value remains the same. By continuing this way, we eventually obtain a feasible solution $(x^*, z^*)$ which is also optimal. 
\end{proof}
An approach that comes to one's mind is as follows: Fix the $z^*$-values as in Lemma~\ref{lem:optZ} and let $\lp(z^*)$ be the respective LP-relaxation.
Note that $\lp(z^*)$ is basically the same as the LP-relaxation of MCPK, where each knapsack $k \in \ksetarg{l}$, $l \in \cset$, has a \emph{reduced capacity} of $\min\mset{\bgt_k, \ccap_l z^*_{kl}}$. 
So we could compute an optimal solution $x^*$ to $\lp(z^*)$ and use our LP-based approximation algorithm (Algorithm~\ref{alg:apxAlg}) to derive an integral solution $x_{\alg}$ 
satisfying $L(x_{\alg}) \ge \frac12(1-\frac{1}{e}) L(x^*, z^*) \ge \frac12(1-\frac{1}{e}) \opt$. 
Unfortunately, however, this approach fails because of the following subtle point: 
If a set $j \in \sset$ is fractionally assigned to some critical knapsack $k \in \kset$ in the optimal LP-solution $x^*$ of $\lp(z^*)$, then it might be infeasible to assign $j$ to $k$ integrally. We could exclude these infeasible assignments beforehand (i.e., by setting $x_{jk} = 0$ whenever $\cst_j > \ccap_l z^*_{kl}$ for a critical knapsack $k$), but then an optimal LP-solution might not recover a sufficiently large fraction of $\opt$.

Instead, we can fix this problem by using a slightly more refined algorithm described in Algorithm~\ref{alg:apxAlgMCPC}. We decompose the fractionally assigned sets into two solutions, one using non-critical knapsacks and one using critical knapsacks only, and then choose the better of those and the integral solution. 
\begin{algorithm2e}[t]
\caption{$\frac{1}{3}(1-\frac{1}{e})$-approximation algorithm for the \probl.}
\label{alg:apxAlgMCPC}
Fix $z^*$ as in Lemma \ref{lem:optZ} and compute an optimal solution $x^*$ to $(\lp(z^*))$. \;
Derive a solution $x$ from $x^*$ that satisfies the bounded split property (Theorem~\ref{th:boundedsplit}). \;
Let $M$ be the corresponding $\sset$-saturating matching. \;
Decompose the fractional solution $x$ into $x^1, x^2, x^3$ as follows: 
\vspace*{-1.5ex}
\begin{align*}
x^1_{jk} & = 
\begin{cases}
x^1_{jk} = 1 & \text{if $x_{jk} = 1$} \\
x^1_{jk} = 0 & \text{otherwise} \\
\end{cases} \\
x^2_{jk} & = 
\begin{cases}
x^2_{jk} = 1 & \text{if $x_{jk} \in (0,1)$, $k$ not critical, $\mset{j,k} \in M$} \\
x^2_{jk} = 0 & \text{otherwise} \\
\end{cases} \\
x^3_{jk} & = 
\begin{cases}
x^3_{jk} = 1 & \text{if $x_{jk} \in (0,1)$, $k$ critical, $\mset{j,k} \in M$} \\
x^3_{jk} = 0 & \text{otherwise} \\
\end{cases}
\end{align*}
\vspace*{-4.5ex}
\;
Output $x_{\alg} \in \arg\max \mset{L(x^1), L(x^2), L(x^3)}$. \;
\end{algorithm2e}

\begin{restatable}{theorem}{ApxMCPCThm}
\label{th:apxAlgMCPC}
Algorithm \ref{alg:apxAlgMCPC} is a $\frac{1}{3}(1-\frac{1}{e})$-approximation algorithm for \probl.
\end{restatable}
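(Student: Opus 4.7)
The plan is to adapt the proof strategy of Theorem~\ref{th:apxalg} for MCPK, with the three-way decomposition of the fractional part dictated by the critical/non-critical distinction in each cluster. The polynomial running time is immediate: Lemma~\ref{lem:optZ} specifies $z^*$ explicitly, and with $z^*$ fixed the program $\lp(z^*)$ is essentially the MCPK LP in which each knapsack $k \in \ksetarg{l}$ carries a reduced capacity of $\min\{\bgt_k,\ccap_l z^*_{kl}\}$. Hence Theorem~\ref{th:boundedsplit} applies verbatim to produce in polynomial time a solution $x$ satisfying the bounded split property with $F(x) \ge F(x^*)$, and the matching $M$ together with the three-way decomposition can then be computed directly.

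The main obstacle is establishing feasibility of $x^1,x^2,x^3$, and in particular the cluster-capacity constraints for the two fractional pieces. The integral part $x^1$ inherits feasibility directly from $x^*$. For $x^2$, each set $j$ is matched (via $M$) to at most one non-critical knapsack $k<\crit{\csetarg{k}}$; by the implicit assignment restriction recalled in Footnote~\ref{rem:implicit-restrictions}, $x_{jk}>0$ implies $\cst_j \le \bgt_k$, so every individual knapsack constraint is satisfied. Fixing a cluster $l \in \cset$ and using that $M$ is a matching, each non-critical $k<\crit{l}$ receives at most one set, so the total cost assigned in cluster $l$ via $x^2$ is at most $\sum_{k<\crit{l}} \bgt_k \le \ccap_l$ by the very definition of $\crit{l}$; the cluster constraint therefore holds. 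For $x^3$, each cluster has exactly one critical knapsack, and $M$ matches it to at most one set of cost at most $\bgt_{\crit{l}} \le \ccap_l$, so both tiers of capacity constraints again hold.

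For the approximation ratio, the key combinatorial observation is that for every $j \in \sset$ with $\sum_k x_{jk}>0$, exactly one of $x^1,x^2,x^3$ accounts for it: either $j$ is integrally assigned in $x$ (contributing to $x^1$) or $j$ is fractional (hence non-isolated in $H_x$ and matched by $M$ to a single knapsack, critical or not). Therefore $\sum_k (x^1_{jk}+x^2_{jk}+x^3_{jk}) \ge \sum_k x_{jk}$ for all $j$, and the monotonicity of $F$ in $x$ yields $F(x^1+x^2+x^3) \ge F(x)$. Item by item one verifies that $F(x^1)+F(x^2)+F(x^3)\ge F(x^1+x^2+x^3)$, because for each item $i$ the contribution to each $F(x^t)$ is either $0$ or $\val_i$, so summing dominates the OR of the three coverage indicators. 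Chaining these observations with $L(x^t)=F(x^t)$ for integer solutions, Lemma~\ref{lem:propUUpaths}, Lemma~\ref{lem:boundFonL}, and the fact that $\lp$ is a relaxation of $\ip$ gives
\[
L(x_{\alg}) \ge \tfrac{1}{3}\bigl(L(x^1)+L(x^2)+L(x^3)\bigr) = \tfrac{1}{3}\bigl(F(x^1)+F(x^2)+F(x^3)\bigr) \ge \tfrac{1}{3}F(x) \ge \tfrac{1}{3}F(x^*) \ge \tfrac{1}{3}\bigl(1-\tfrac{1}{e}\bigr)\opt,
\]
as required.
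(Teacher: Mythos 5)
Your proof is correct and takes essentially the same route as the paper: the same three-way decomposition into $x^1,x^2,x^3$, the same feasibility arguments (your bound $\sum_{k<\crit{l}}\bgt_k\le\ccap_l$ for $x^2$ is equivalent to the paper's use of $\bgt_k=\ccap_l z^*_{kl}$ from Lemma~\ref{lem:optZ}), and the same chain of inequalities for the approximation ratio, with your per-item ``sum dominates OR'' step spelling out the superadditivity that the paper invokes implicitly.
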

\begin{proof}
Note that by Theorem~\ref{th:boundedsplit} the fractional solution $x$ derived in Step 2 of the algorithm is a feasible solution to ($\lp(z^*)$). Clearly, $x^1$ is a feasible (integral) solution. Further, the $\sset$-saturating matching $M$ ensures that $\cst_j \le \bgt_k$ for every $\mset{j, k} \in M$. In particular, this implies that the solution $x^2$ is feasible because for every (non-critical) knapsack $k \in \ksetarg{l}$, $l \in \cset$, we have $\bgt_k = \ccap_l z^*_{kl}$ by Lemma~\ref{lem:optZ}, and thus 
$\sum_{j \in \sset} c_j x^2_{jk} \le B_k = \ccap_l z^*_{kl}$. 

It remains to argue that $x^3$ is feasible. But this holds because for every cluster $l \in \cset$ there is at most one set $j \in \sset$ assigned to this cluster, namely the one (if any) assigned to the critical knapsack $\crit{l} \in \ksetarg{l}$ with $\mset{j, k} \in M$. In particular, for $k = \crit{l}$ we have $\sum_{j \in \sset} \cst_j x^3_{jk} \le \bgt_k \le \ccap_l$.

It remains to bound the approximation factor of the algorithm. Using the same arguments as in the proof of Theorem~\ref{th:apxalg}, we obtain 
\begin{align*}
L(x_{\alg}) 
& \geq \frac{1}{3}(L(x^1)+L(x^2)+L(x^3)) 
  = \frac{1}{3}(F(x^1)+F(x^2)+F(x^3))
   \geq \frac{1}{3}F(x)\\
& \geq \frac{1}{3}F(x^*)
\geq \frac{1}{3}\bigg(1-\frac{1}{e}\bigg)L(x^*)
= \frac{1}{3}\bigg(1-\frac{1}{e}\bigg)L(x^*, z^*) 
\geq \frac{1}{3}\bigg(1-\frac{1}{e}\bigg) \opt.
\end{align*}
Note that the last equality holds because $x^*$ is an optimal solution to ($\lp(z^*)$). 
\end{proof}

\section{Multiple Knapsack Problem with Cluster Constraints}
\label{s:additionalApplications}

Our technique introduced in the previous section 
can also be applied to other cluster capacitated problems. One such example is the Multiple Knapsack Problem with Cluster Constraints (MKPC). 
We first derive a $\frac{1}{3}$-approximation algorithm for MKPC and then present a more sophisticated iterative rounding scheme that provides a $\frac{1}{2}$-approximation algorithm for certain special cases of MKPC. 

Note that for MKPC the notions of sets and items coincide and we simply refer to them as items; in particular, each item $j \in \sset$ now has a profit $\val_j$ and a cost $\cst_j$. Thus, we can also drop the $y$-variables in the ILP formulation of the problem.
Throughout this section, we assume that the knapsacks in $\kset = [p]$ are ordered by non-increasing capacities, i.e., if $k, k' \in \kset$ with $k < k'$ then $\bgt_k \ge \bgt_{k'}$.

\subsection{$\frac{1}{3}$-Approximation for MKPC with General Clusters}\label{ss:MKP}

MKPC is a generalization of the classical multiple knapsack problem (MKP). 
For MKP, the optimal solution of the LP-relaxation satisfies the bounded split property (see, e.g., \cite{Shmoys1993}) and there is a natural greedy algorithm to find an optimal solution of the LP-relaxation (see, e.g., \cite{Kellerer2004}).
Here we exploit some ideas of the previous section to derive a greedy algorithm (called \greedy\ subsequently) for MKPC and prove that it computes an optimal solution to the LP-relaxation; this algorithm is also used at the core of our iterative rounding scheme in the next section. 
Further, we show that the constructed solution satisfies the bounded split property. Exploiting this, we can then easily obtain a $\frac{1}{3}$-approximation algorithm for MKPC.

Our algorithm first fixes optimal $z^*$-variables as defined in Lemma~\ref{lem:optZ} and then runs an adapted version of the greedy algorithm in \cite{Kellerer2004} on the instance with the reduced capacities.\footnote{The greedy algorithm for MKP described in \cite{Kellerer2004} operates on a per-knapsack basis, while our algorithm proceeds on a per-item basis.} 
 \greedy\ assigns items in non-increasing order of their \emph{efficiency ratios} (breaking ties arbitrarily), where the efficiency ratio of item $j \in \sset$ is defined as $\val_j/\cst_j$. 
When item $j$ is considered, it is assigned to the knapsack with the smallest capacity that can hold the item and has some residual capacity. More formally, a knapsack $k \in \kset$ \emph{can hold} item $j \in \sset$ if $\bgt_k \ge \cst_j$. 
Further, we say that a knapsack $k$ has \emph{residual capacity} with respect to $(x^*, z^*)$ if 
$
\res_k(x^*, z^*) := \ccap_{\csetarg{k}}z^*_{k\csetarg{k}}-\sum_{j\in \sset}\cst_j x^*_{jk} > 0.
$
(Recall that $\csetarg{k}$ denotes the cluster to which knapsack $k$ belongs.) 
We continue this way until either item $j$ is assigned completely (possibly split over several knapsacks) or all knapsacks that can hold $j$ have zero residual capacity. We then continue with the next item in the order.

We show that \greedy\ computes an optimal fractional solution which satisfies the bounded split property.
We say that an item $j \in \sset$ is a \emph{split item} if it is fractionally assigned to one or multiple knapsacks. Let $\kcs$ refer to the set of all split items. 
An item $j \in \kcs$ is a \emph{split item of knapsack $k \in \kset$} if $k$ is the knapsack with the smallest capacity to which $j$ is assigned in \greedy.

\begin{restatable}{lemma}{GreedyLem}
\label{lem:optimalityConstructiveLPRelax}
\label{lem:constructiveLPSolBoundedSplit}
\greedy\ computes an optimal solution $(x^*,z^*)$ to the LP-relaxation of MKPC which satisfies the bounded split property.
\end{restatable}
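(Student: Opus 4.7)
The plan is to handle the two parts separately, using Lemma~\ref{lem:optZ} to fix the $z$-component of any optimal LP-solution to the closed form prescribed there. This reduces both parts to reasoning about how \greedy\ fills the knapsacks under the resulting \emph{reduced capacities} $\ccap_{\csetarg{k}} z^*_{k,\csetarg{k}}$, which equal $\bgt_k$ for knapsacks with $k<\crit{l}$, equal $\ccap_l - \sum_{k<\crit{l}}\bgt_k$ for the critical knapsack $k=\crit{l}$, and equal $0$ for $k>\crit{l}$ in each cluster $l$.

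\textbf{Optimality.} With $z^*$ fixed, the LP is a fractional multiple knapsack problem with per-knapsack reduced capacities plus the implicit restriction $x_{jk}=0$ whenever $\cst_j>\bgt_k$. I would argue optimality of $x^*$ by a standard exchange argument: if some feasible $(x',z^*)$ had strictly larger objective, choose $x'$ to minimize its $L_1$-distance to $x^*$ and find a pair of items $j,j'$ (processed by \greedy\ in that order, so $\val_j/\cst_j \ge \val_{j'}/\cst_{j'}$) together with knapsacks such that \greedy\ places more mass of $j$ where $x'$ places $j'$. Shifting an $\varepsilon$-amount of $j$ into the slot held by $j'$ (and vice versa) preserves all capacity constraints and does not decrease the objective, contradicting minimality. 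The ``smallest capacity that can hold'' tie-breaking rule of \greedy\ is essential here: it guarantees that high-cost items are never blocked from the few knapsacks able to accommodate them, so the shifted mass respects $\cst_j \le \bgt_k$.

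\textbf{Bounded split property.} I would exhibit an $\sset$-saturating matching $M$ in the support graph $H_{x^*}$ by mapping each split item $j\in\kcs$ to its split knapsack $k^*(j)$. Two things need to be verified. First, $\{j,k^*(j)\}\in E_{x^*}$: when \greedy\ first places $j$ into $k^*(j)$ we have $x^*_{j,k^*(j)}=\min(\res_{k^*(j)}(x^*,z^*)/\cst_j,1)$; if the minimum were $1$ then $j$ would be fully placed in a single knapsack and not split, so in fact $x^*_{j,k^*(j)}\in(0,1)$. Second, $j\mapsto k^*(j)$ is injective: suppose $j$ is processed before $j'$ by \greedy\ and both have $k^*(j)=k^*(j')=k^*$. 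Since $j$ is split and $k^*$ is its smallest-capacity assignee, \greedy\ must have moved on from $k^*$ while placing $j$, which happens only once $\res_{k^*}$ has dropped to $0$. Because residual capacities are non-increasing throughout \greedy's execution, $\res_{k^*}=0$ by the time $j'$ is processed, contradicting the fact that $k^*$ was selected for $j'$.

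The main technical obstacle I foresee is the bookkeeping of the exchange argument for optimality: the reduced capacities interact both with the per-knapsack caps $\bgt_k$ and with the item-specific restriction $\cst_j\le\bgt_k$, so one must choose the direction and size of each shift with care (possibly threading through an intermediate knapsack) to keep both solutions feasible while gradually transforming $x'$ into $x^*$. The matching construction for the bounded split property is by contrast essentially immediate once the residual-monotonicity of \greedy\ is recognized.
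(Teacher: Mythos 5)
Your proposal is correct and follows essentially the same route as the paper: fix $z^*$ via Lemma~\ref{lem:optZ}, treat the result as a fractional multiple knapsack instance with reduced capacities, establish optimality of \greedy\ by the classical efficiency-ordering exchange argument (where the smallest-fitting-knapsack rule is exactly what makes the exchange feasible), and obtain the $\sset$-saturating matching by pairing each split item with its split knapsack, using the fact that a knapsack's residual capacity drops to zero the moment it acquires a split item so that each knapsack has at most one. The only part left schematic is the bookkeeping of the exchange step, but the plan you describe is the standard one and contains no step that would fail.
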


In the analysis of Algorithm \ref{alg:apxAlgMCPC} for the \probl, a factor $(1-\frac{1}{e})$ is lost in the approximation guarantee by transforming the optimal solution of the LP-relaxation to a solution that satisfies the bounded split property. By Lemma~\ref{lem:optimalityConstructiveLPRelax}, we can avoid this loss here.

\begin{theorem}
Algorithm \ref{alg:apxAlgMCPC} is a $\frac{1}{3}$-approximation algorithm for  MKPC.
\end{theorem}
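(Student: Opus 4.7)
The plan is to mirror the proof of Theorem~\ref{th:apxAlgMCPC}, but exploit the fact that MKPC has a much simpler (linear) objective and, by Lemma~\ref{lem:optimalityConstructiveLPRelax}, we can already obtain an optimal LP-solution that satisfies the bounded split property \emph{without} invoking pipage rounding. Concretely, I would replace Step~1--2 of Algorithm~\ref{alg:apxAlgMCPC} by a single call to \greedy, which by Lemma~\ref{lem:optimalityConstructiveLPRelax} directly yields a fractional solution $(x^*, z^*)$ satisfying both LP-optimality and the bounded split property; Steps~3--5 (matching extraction and three-way decomposition into $x^1, x^2, x^3$) remain unchanged.

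Next I would argue feasibility of $x^1, x^2, x^3$. These arguments are verbatim the same as in the proof of Theorem~\ref{th:apxAlgMCPC}: $x^1$ inherits feasibility from the fact that the integral portion of $x^*$ respects all knapsack and cluster capacities; $x^2$ is feasible because for every non-critical knapsack $k \in \ksetarg{l}$ Lemma~\ref{lem:optZ} gives $\bgt_k = \ccap_l z^*_{kl}$, and the $\sset$-saturating matching $M$ only pairs an item $j$ with a knapsack $k$ of capacity $\bgt_k \ge \cst_j$; and $x^3$ is feasible because, per cluster $l$, at most one item is matched to the (unique) critical knapsack $\crit{l}$, whose cost is bounded by $\bgt_{\crit{l}} \le \ccap_l$.

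The bounding step is where the analysis diverges from (and simplifies) the \probl\ argument. In MKPC each set is a singleton item, so the objective $L(x) = \sum_{j \in \sset} \val_j \sum_{k \in \kset} x_{jk}$ is linear in $x$. Every integrally assigned item contributes $\val_j$ to $L(x^1)$, while every split item $j \in \kcs$ is rounded up to a full unit of assignment by the matching $M$ and hence contributes $\val_j$ to exactly one of $L(x^2), L(x^3)$. Since $\sum_k x^*_{jk} \le 1$ for every $j$, this gives
\begin{equation*}
L(x^1) + L(x^2) + L(x^3) \;\ge\; \sum_{j \in \sset} \val_j \sum_{k \in \kset} x^*_{jk} \;=\; L(x^*, z^*) \;\ge\; \opt,
\end{equation*}
where the last inequality uses that $(\lp)$ is a relaxation of MKPC. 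Consequently,
\begin{equation*}
L(x_{\alg}) \;=\; \max\{L(x^1), L(x^2), L(x^3)\} \;\ge\; \tfrac{1}{3}\bigl(L(x^1)+L(x^2)+L(x^3)\bigr) \;\ge\; \tfrac{1}{3}\,\opt.
\end{equation*}

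The only subtle point (and really the only thing to verify carefully) is the equality between the matching-rounded contribution and $\val_j$ per split item: one has to note that each $j \in \kcs$ is saturated by $M$ and that $M$ places $j$ either on a non-critical knapsack (going into $x^2$) or on a critical knapsack (going into $x^3$), but never both. Given Lemma~\ref{lem:optimalityConstructiveLPRelax} and the structural properties of the critical knapsacks from Lemma~\ref{lem:optZ}, this is immediate, so I do not expect any serious obstacle — the proof is essentially a streamlined version of Theorem~\ref{th:apxAlgMCPC} in which the pipage-related $(1-\tfrac{1}{e})$ factor disappears because the objective is already linear and the LP-solution produced by \greedy\ already satisfies the bounded split property.
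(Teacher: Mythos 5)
Your proposal is correct and is exactly the argument the paper intends: the paper itself only remarks that Lemma~\ref{lem:optimalityConstructiveLPRelax} lets one reuse the analysis of Theorem~\ref{th:apxAlgMCPC} while avoiding the $(1-\frac{1}{e})$ loss, since \greedy\ already produces an optimal LP-solution with the bounded split property and the MKPC objective is linear. Your feasibility and counting arguments (each split item saturated by $M$ contributes its full profit to exactly one of $L(x^2)$, $L(x^3)$) match the paper's route.
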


\subsection{$\frac{1}{2}$-Approximation for MKPC with Isolation Property}
\label{ss:non-overlappingclusters}

We derive an iterative rounding $\frac{1}{2}$-approximation algorithm for instances of MKPC that satisfy a certain \emph{isolation property}. A practical situation in which the isolation property hold is when standardized containers that all have the same capacity are used.

Let $(x^*, z^*)$ be the optimal solution to the LP-relaxation of MKPC computed by \greedy.
We say that an item $j \in \sset$ is an \emph{unsplit item} if it is integrally assigned to some knapsack, i.e., $x^*_{jk} = 1$ for some $k \in \kset$. 
Let $\kcu$ and $\kcs$ refer to the sets of unsplit and split items, respectively. 
As argued in the proof of Lemma~\ref{lem:constructiveLPSolBoundedSplit}, the split item $j \in \kcs$ of knapsack $k \in \kset$ is unique; we use $\splits_k = j$ to refer to the split item of knapsack $k$. 
Further, we use $\usarg{k}$ to denote the set of all unsplit items assigned to knapsack $k$. For each cluster $l \in \cset$, we denote by $\usarg{l} = \cup_{k \in \ksetarg{l}} \usarg{k}$ the set of all unsplit items and by $\ssarg{l} = \cup_{k \in \ksetarg{l}} \splits_k$ the set of all split items assigned to $l$.

A cluster $\iso \in \cset$ is said to be \emph{isolated} if for every item $j \in \usarg{l}\cup \ssarg{l}$ assigned to some cluster $l \neq \iso$ it holds that $x^*_{jk} = 0$ for all $k \in \ksetarg{\iso}$.

We say that a class of instances of MKPC satisfies the \emph{isolation property} if after the removal of an arbitrary set of clusters there always exists an isolated cluster. 
For example, the isolation property holds true for instances whose clusters can be \emph{disentangled} in the sense that we can impose an order on the set of clusters $\cset = [q]$ such that for any two clusters $l, l' \in \cset$ with $l < l'$ it holds that $\min_{k\in \ksetarg{l}}\mset{\bgt_{k}}\geq\max_{k\in \ksetarg{l'}}\mset{\bgt_{k}}$.

The following Rounding Lemma provides a crucial building block of our iterative rounding scheme. 
It shows that we can always find a feasible assignment $\sigma$ which recovers at least half of the fractional profit of an isolated cluster $\iso$.
Recall that 
$\sset_{\sigma}$ refers to the set of items assigned under $\sigma$. 

\begin{restatable}[Rounding Lemma]{lemma}{IterativeRoundingLem}
\label{th:halfClusterQMKPC}
Let $\iso$ be an isolated cluster. Then there exists a feasible assignment $\sigma:\usarg{\iso}\cup\ssarg{\iso} \to \ksetarg{\iso}$ such that
\begin{equation}\label{eq:profitInequialityClusterQ}
    \sum_{j\in \sset_{\sigma}} \val_j
    \geq 
    \sum_{j\in \sset_{\sigma}} \sum_{k\in \ksetarg{\iso}}\val_j x_{jk}^*
    \ge
    \frac{1}{2} \bigg(\sum_{j \in \sset}\sum_{k\in \ksetarg{\iso}}\val_jx_{jk}^*\bigg).
\end{equation}
\end{restatable}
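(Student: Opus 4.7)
The plan is to exhibit a feasible assignment $\sigma$ of items in $\usarg{\iota}\cup\ssarg{\iota}$ into $\ksetarg{\iota}$ whose chosen items, when weighted by the LP, capture at least half of the total LP profit on $\iota$. The first lever is the isolation hypothesis: since no item outside $\usarg{\iota}\cup\ssarg{\iota}$ can fractionally occupy any knapsack in $\ksetarg{\iota}$, the right-hand side $\sum_{j\in\sset}\sum_{k\in\ksetarg{\iota}}\val_j x^*_{jk}$ collapses to $P_\iota:=U+S$, where $U=\sum_{j\in\usarg{\iota}}\val_j$ is the integer LP contribution and $S=\sum_{j\in\ssarg{\iota}}\val_j\bar{x}^*_j$ is the fractional one, using $\bar{x}^*_j:=\sum_{k\in\ksetarg{\iota}}x^*_{jk}$. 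The first inequality in \eqref{eq:profitInequialityClusterQ} is then immediate from $\bar{x}^*_j\le 1$.

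For the second inequality, I plan a per-knapsack ``pick the best option'' argument. For each $k\in\ksetarg{\iota}$, let $A(k):=\sum_{i\in\usarg{k}}\val_i$ (profit of the unsplit items placed on $k$ in the LP) and $B(k):=\val_{\splits_k}$ (profit of the single split item with home $k$, or $0$ if none exists). Writing $\pi_k$ for the LP profit on $k$, one checks that $A(k)+B(k)\ge\pi_k$ and hence $\max\{A(k),B(k)\}\ge\pi_k/2$; since $\sum_k\pi_k=P_\iota$, summing these per-knapsack lower bounds would directly yield $\tfrac{1}{2}P_\iota$. Each option is per-knapsack feasible ($A(k)$ by LP feasibility, $B(k)$ because $\cst_{\splits_k}\le\bgt_k$ from \greedy's placement), and since distinct split items have distinct home knapsacks, options on different knapsacks do not conflict.

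The delicate part is stitching these local choices into a globally cluster-feasible assignment. Here I lean on the critical-knapsack structure from Lemma~\ref{lem:optZ}: for every non-critical $k<\crit{\iota}$ the reduced LP capacity equals $\bgt_k$, and $\sum_{k<\crit{\iota}}\bgt_k\le\ccap_\iota$. Therefore, picking $\max\{A(k),B(k)\}$ on each such $k$ consumes at most $\sum_{k<\crit{\iota}}\bgt_k$ of the cluster budget and leaves exactly the reduced capacity of $\crit{\iota}$ available for use there. Placing $A(\crit{\iota})$ on $\crit{\iota}$ (always feasible, since $A(\crit{\iota})$ fits within its LP reduced capacity) then produces a cluster-feasible candidate $\sigma_1$ whose profit is at least $\tfrac{1}{2}\sum_{k<\crit{\iota}}\pi_k+A(\crit{\iota})$.

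The main obstacle will be the residual case where the LP value on $\crit{\iota}$ is dominated by its split item, i.e.\ $A(\crit{\iota})<B(\crit{\iota})\,x^*_{\splits_{\crit{\iota}}\crit{\iota}}$; then $\sigma_1$ alone may not reach $\tfrac{1}{2}P_\iota$. To handle this I plan a swap-based second candidate: $\splits_{\crit{\iota}}$ always fits on some larger knapsack $k<\crit{\iota}$ (because $\bgt_k\ge\bgt_{\crit{\iota}}\ge\cst_{\splits_{\crit{\iota}}}$), giving a second feasible assignment $\sigma_2$ in which $k$'s chosen option is exchanged for $\splits_{\crit{\iota}}$. A short averaging/exchange argument, exploiting that the swap is profitable precisely in the regime where $\sigma_1$ underperforms on $\crit{\iota}$, will show that $\max\{\text{profit}(\sigma_1),\text{profit}(\sigma_2)\}\ge\tfrac{1}{2}P_\iota$. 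The isolation hypothesis (which guarantees no external item competes for $\iota$'s knapsacks) and the greedy/critical structure of $(x^*,z^*)$ are both essential for this final swap-and-average step.
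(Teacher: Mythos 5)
Your high-level plan (capture at least half of the cluster's LP profit by choosing between an ``unsplit'' option and a ``split'' option) is the right kind of idea, and your feasibility observations are sound: each $\splits_k$ with $k<\crit{\iso}$ fits on its home knapsack within the reduced capacity $\bgt_k$, the reduced capacities sum to $\ccap_\iso$, and $\splits_{\crit{\iso}}$ fits on any knapsack of smaller index. However, the profit accounting has a genuine gap. Your central per-knapsack inequality $A(k)+B(k)\ge\pi_k$ is false: under \greedy\ an item that starts on a small knapsack and overflows spills onto \emph{larger} knapsacks of the same cluster, so a single knapsack $k$ can carry fractions of several split items whose homes are knapsacks $k'>k$; only one of them is $\splits_k$, and only that one is counted in $B(k)$. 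Concretely, with $\bgt_{k_1}=10$, $\bgt_{k_2}=4$ and items of costs $3,3,9$ in efficiency order, knapsack $k_1$ ends up holding $2/3$ of the second item (home $k_2$) and $8/9$ of the third (home $k_1$), so $A(k_1)+B(k_1)=\val_{j_3}<\pi_{k_1}$. The inequality you need only holds in aggregate over the whole cluster, $\sum_{k\in\ksetarg{\iso}}\pi_k\le\sum_{j\in\usarg{\iso}}\val_j+\sum_{j\in\ssarg{\iso}}\val_j$, because each split item's fractional profit spread over several knapsacks must be charged once, to its home. Consequently $\max\{A(k),B(k)\}\ge\pi_k/2$ fails, and your lower bound on $\mathrm{profit}(\sigma_1)$ does not follow. (Relatedly, the middle expression of \eqref{eq:profitInequialityClusterQ} requires the \emph{LP-weight} $\sum_{j\in\sset_\sigma}\val_j\bar x^*_j$ of the selected items to be at least half the total; a per-knapsack ``best of two'' selection does not cover every item's LP-weight by one of two candidates, so the standard averaging does not apply to it.)

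The second gap is the step you explicitly defer: the swap-and-average argument for the critical knapsack. Writing $m_{k_0}=\max\{A(k_0),B(k_0)\}$ for the displaced option, one gets $\mathrm{profit}(\sigma_2)=\mathrm{profit}(\sigma_1)-m_{k_0}+\val_{\splits_{\crit{\iso}}}$, and even granting your (false) per-knapsack bounds, $\max\{\mathrm{profit}(\sigma_1),\mathrm{profit}(\sigma_2)\}\ge\frac12 P_\iso$ would require $m_{k_0}\le\frac12\val_{\splits_{\crit{\iso}}}$ for some $k_0<\crit{\iso}$, which nothing guarantees. The real difficulty of the lemma is precisely here: the natural ``take all split items'' candidate may violate $\ccap_\iso$ only because of $\splits_{\crit{\iso}}$ (all other split items together respect the reduced capacities, which sum to $\ccap_\iso$), and one must account for the LP-weight $\val_{\splits_{\crit{\iso}}}\bar x^*_{\splits_{\crit{\iso}}}$ without paying an uncontrolled displacement cost. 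You need a cluster-level dichotomy (unsplit items versus split items) combined with a dedicated treatment of $\splits_{\crit{\iso}}$ — e.g.\ exploiting that its LP-weight on $\ksetarg{\iso}$ is tied to the residual capacity it actually occupies — rather than a per-knapsack exchange; as written, the proof does not go through.
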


\begin{algorithm2e}[t]
\caption{Iterative rounding algorithm for MKPC with isolation property.}
\label{alg:alg-mkpc}
Let $\lp^{(1)}$ be the original LP-relaxation of MKPC. \;
\For{$i = 1, \dots, q$}{
Compute an optimal solution $(x^{(i)}, z^{(i)})$ to $\lp^{(i)}$ using \greedy. \;
Identify an isolated cluster $\iso^{(i)}$ with respect to $(x^{(i)}, z^{(i)})$. \;
Obtain a feasible assignment $\sigma^{(i)}$ for cluster $\iso^{(i)}$ using the Rounding Lemma. \;
Add the constraints \eqref{eq:iterativeRounding1}--\eqref{eq:itererattiveRounding0} fixing the assignment $\sigma^{(i)}$ for $\iso^{(i)}$ to obtain $\lp^{(i+1)}$.
}
\end{algorithm2e}

We next describe our iterative rounding scheme (see Algorithm~\ref{alg:alg-mkpc}). 
We first compute an optimal solution $(x^*, y^*)$ to the LP relaxation of MKPC using \greedy.  
Because of the isolation property, there exists an isolated cluster $\iso$. 
We then apply the Rounding Lemma to obtain an assignment $\sigma$ for cluster $\iso$. 
After that, we fix the corresponding variables in the LP-relaxation accordingly and repeat. By iterating this procedure, we obtain a sequence of isolated clusters $\iso^{(1)}, \dots, \iso^{(q)}$ and assignments $\sigma^{(1)}, \dots, \sigma^{(q)}$, where $\sigma^{(l)}$ is the assignment obtained by applying the Rounding Lemma to cluster $\iso^{(l)}$.

Let $\sigma(i) = \seq{\sigma^{(1)}, \dots, \sigma^{(i)}}$ be the (combined) assignment for the first $i$ clusters $\iso^{(1)}, \dots, \iso^{(i)}$ that we obtain 
at the end of iteration $i$. The LP-relaxation $\lp^{(i+1)}$ that we solve in iteration $i+1$ is then defined as the LP (\ref{eq:objILPMCPC})--(\ref{eq:posZMCPC}) with the following additional constraints:
\begin{subequations}
\begin{alignat}{2}
     x_{jk} = 1 &\quad \forall l\in\mset{\iso^{(1)}, \dots, \iso^{(i)}} \quad\forall k \in \ksetarg{l} \quad \forall j \in \sset_{\sigma(i)}(k),\label{eq:iterativeRounding1}\\
     x_{jk} = 0 & \quad\forall l\in\mset{\iso^{(1)}, \dots, \iso^{(i)}} \quad\forall k \in \ksetarg{l} \quad \forall j \not\in \sset_{\sigma(i)}(k)\label{eq:itererattiveRounding0}.
\end{alignat}
\end{subequations}
Let $(x^{(i+1)}, z^{(i+1)})$ be the optimal solution of $\lp^{(i+1)}$ computed by \greedy\ in iteration $i+1$.
The next lemma establishes that the final assignment recovers at least half of the optimal solution.

\begin{restatable}{lemma}{HalfRecoveryLem}
\label{th:iterativeRounding}
Fix some $1 \le i \le q$ and let $\sigma(i)$ be the assignment at the end of iteration $i$. Further, let $(x^*, z^*)$ be the optimal solution to the original LP-relaxation constructed by \greedy. Then

\begin{equation}
\sum_{j\in \sset_{\sigma(i)}} \val_j \geq \frac{1}{2}\sum_{l \in \mset{\iso^{(1)}, \dots, \iso^{(i)}}} \sum_{k\in \ksetarg{l}}\sum_{j\in \sset}\val_jx^*_{jk}.    
\end{equation}
\end{restatable}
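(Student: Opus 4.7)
The plan is to prove the statement by induction on $i$; write $V_l := \sum_{j \in \sset}\sum_{k \in \ksetarg{l}}\val_j x^*_{jk}$ for the LP value of cluster $l$ under $x^*$ and $V^{(i)}_{l} := \sum_{j \in \sset}\sum_{k \in \ksetarg{l}}\val_j x^{(i)}_{jk}$ for the analogous quantity under $x^{(i)}$. The base case $i=0$ is immediate, as both sides vanish with the convention that $\sigma(0)$ is the empty assignment.

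For the inductive step I would first observe that the items assigned in iteration $i$ are disjoint from those already fixed: each $j \in \sset_{\sigma^{(i)}} \subseteq \usarg{\iso^{(i)}}\cup\ssarg{\iso^{(i)}}$ has positive $x^{(i)}$-value on some knapsack in $\ksetarg{\iso^{(i)}}$, whereas each $j \in \sset_{\sigma(i-1)}$ has $x^{(i)}_{jk}=0$ for every $k \in \ksetarg{\iso^{(i)}}$ by the fixing constraints \eqref{eq:iterativeRounding1}--\eqref{eq:itererattiveRounding0}. Therefore $\sum_{j \in \sset_{\sigma(i)}}\val_j = \sum_{j \in \sset_{\sigma(i-1)}}\val_j + \sum_{j \in \sset_{\sigma^{(i)}}}\val_j$, and combining the induction hypothesis with the Rounding Lemma applied to $(x^{(i)},z^{(i)})$ and the isolated cluster $\iso^{(i)}$ yields $\sum_{j \in \sset_{\sigma(i)}}\val_j \geq \tfrac{1}{2}\sum_{s=1}^{i-1}V_{\iso^{(s)}} + \tfrac{1}{2}V^{(i)}_{\iso^{(i)}}$.

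It therefore suffices to establish the per-iteration comparison $V^{(i)}_{\iso^{(i)}} \geq V_{\iso^{(i)}}$. My plan is to exhibit a feasible solution $\tilde x$ for $\lp^{(i)}$ that coincides with $x^*$ on the knapsacks of $\iso^{(i)}$ and realizes $\sigma(i-1)$ on the knapsacks of the clusters $\iso^{(1)},\dots,\iso^{(i-1)}$. The isolation of each $\iso^{(s)}$ at the moment it was chosen, combined with the structural description of the greedy LP solutions given by Lemma~\ref{lem:constructiveLPSolBoundedSplit}, would imply that no item fixed in a prior iteration carries positive $x^*$-value on $\ksetarg{\iso^{(i)}}$, so that the assignment constraints \eqref{eq:max1assigned}, the knapsack capacities and the cluster capacities are all respected by $\tilde x$.

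The main obstacle is the passage from the existence of a feasible $\tilde x$ with LP value $V_{\iso^{(i)}}$ on $\iso^{(i)}$ to the inequality $V^{(i)}_{\iso^{(i)}} \geq V_{\iso^{(i)}}$: because $x^{(i)}$ is optimal for $\lp^{(i)}$ globally and not cluster by cluster, this is not an immediate consequence of optimality. I intend to resolve this by exploiting the specific structure of \greedy, which processes items in order of efficiency ratio and fills smallest-capacity knapsacks first. Once the isolation of $\iso^{(i)}$ is used to decouple its items from those fixed earlier, the \greedy-run that produces $x^{(i)}$ should deposit on $\ksetarg{\iso^{(i)}}$ the same LP mass as the original \greedy-run producing $x^*$, yielding $V^{(i)}_{\iso^{(i)}} = V_{\iso^{(i)}}$ and closing the induction.
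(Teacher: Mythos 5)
Your reduction of the lemma to the per-iteration comparison $V^{(i)}_{\iso^{(i)}}\ge V_{\iso^{(i)}}$ is a reasonable skeleton, and the disjointness of $\sset_{\sigma^{(i)}}$ from $\sset_{\sigma(i-1)}$ is correctly argued (via \eqref{eq:iterativeRounding1} together with \eqref{eq:max1assigned}). But the comparison itself is the entire content of the lemma, and you leave it unproved; moreover, the plan you sketch for it does not go through. The specific flaw is the claim that ``no item fixed in a prior iteration carries positive $x^*$-value on $\ksetarg{\iso^{(i)}}$'' follows from isolation. Isolation of $\iso^{(s)}$ with respect to $x^{(s)}$ is a one-sided condition: it says that items whose \emph{home} lies outside $\iso^{(s)}$ put no mass \emph{on} $\iso^{(s)}$; it says nothing about where the items of $\usarg{\iso^{(s)}}\cup\ssarg{\iso^{(s)}}$ --- precisely the ones $\sigma^{(s)}$ fixes --- put their remaining mass. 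Indeed, the split item of the critical knapsack of $\iso^{(s)}$ typically spills over into a cluster that is still unfixed (in the disentangled example, \greedy\ fills the smallest knapsacks first, so the overflow from the first isolated cluster lands in clusters isolated later). Such an item, once fixed integrally into $\iso^{(s)}$, has already used its unit of constraint \eqref{eq:max1assigned}, so your $\tilde x$, which additionally gives it its $x^*$-mass on $\ksetarg{\iso^{(i)}}$, is infeasible.

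Even setting feasibility aside, the concluding step --- that the \greedy\ run producing $x^{(i)}$ ``should deposit on $\ksetarg{\iso^{(i)}}$ the same LP mass as the original run'' --- is asserted rather than argued, and it is not true as stated: after fixing $\iso^{(1)},\dots,\iso^{(i-1)}$, the item pool seen by $\ksetarg{\iso^{(i)}}$ changes in both directions (items locked into earlier clusters are withdrawn from $\iso^{(i)}$, while items rejected by earlier roundings become fully available), so the value \greedy\ places on $\iso^{(i)}$ generally differs from $V_{\iso^{(i)}}$. A genuine argument is needed either to show the per-cluster value has not decreased, or --- more robustly, and closer to what the aggregate form of the statement suggests --- to show that the losses and gains telescope across iterations so that the bound summed over $\iso^{(1)},\dots,\iso^{(i)}$ survives. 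As it stands, you have identified the right target but proved neither the per-cluster inequality nor any aggregate substitute, so the induction does not close.
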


We summarize the result in the following theorem. 

\begin{restatable}{theorem}{ApxThmMKPC}
Algorithm~\ref{alg:alg-mkpc} is a $\frac{1}{2}$-approximation algorithm for instances of MKPC satisfying the isolation property. 
\end{restatable}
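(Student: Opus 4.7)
The plan is short: combine the Rounding Lemma (Lemma~\ref{th:halfClusterQMKPC}) with the recovery guarantee of Lemma~\ref{th:iterativeRounding} after $q = |\cset|$ iterations, then appeal to LP relaxation to upper bound $\opt$.

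First I would verify that Algorithm~\ref{alg:alg-mkpc} is well defined throughout all $q$ iterations. At iteration $i+1$, the LP $\lp^{(i+1)}$ differs from $\lp^{(1)}$ only by fixing the $x$-variables on the knapsacks of the clusters $\iso^{(1)}, \dots, \iso^{(i)}$ already processed, through constraints \eqref{eq:iterativeRounding1}--\eqref{eq:itererattiveRounding0}. Since these fixed assignments are feasible (by the Rounding Lemma) and affect only the already-processed clusters, the reduced LP remains non-empty, and \greedy\ produces an optimal solution $(x^{(i+1)}, z^{(i+1)})$ in polynomial time (by Lemma~\ref{lem:optimalityConstructiveLPRelax}). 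By the isolation property, the subinstance obtained by removing the already-processed clusters $\iso^{(1)}, \dots, \iso^{(i)}$ still contains an isolated cluster, so $\iso^{(i+1)}$ exists. Hence each iteration is well defined and runs in polynomial time.

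Second, the combined assignment $\sigma(q) = \seq{\sigma^{(1)}, \dots, \sigma^{(q)}}$ is a feasible solution to MKPC. Each $\sigma^{(i)}$ assigns items only to knapsacks in $\ksetarg{\iso^{(i)}}$ and respects both the individual knapsack capacities and the cluster capacity $\ccap_{\iso^{(i)}}$ by the Rounding Lemma. Since the clusters $\iso^{(1)}, \dots, \iso^{(q)}$ partition $\kset$, these per-cluster assignments have pairwise disjoint ranges, and their union $\sigma(q)$ therefore satisfies every knapsack and every cluster capacity constraint simultaneously.

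For the approximation ratio, I would apply Lemma~\ref{th:iterativeRounding} with $i = q$. Because $\mset{\iso^{(1)}, \dots, \iso^{(q)}} = \cset$ and each knapsack lies in exactly one cluster,
\[
\sum_{j \in \sset_{\sigma(q)}} \val_j
\ge
\frac{1}{2}\sum_{l \in \cset} \sum_{k \in \ksetarg{l}} \sum_{j \in \sset} \val_j x^*_{jk}
= \frac{1}{2}\sum_{k \in \kset} \sum_{j \in \sset} \val_j x^*_{jk}
= \frac{1}{2} L(x^*, z^*),
\]
where $(x^*, z^*)$ is the optimal solution to the original LP-relaxation constructed by \greedy. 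Finally, since that LP is a relaxation of the ILP for MKPC, $L(x^*, z^*) \ge \opt$, yielding $\sum_{j \in \sset_{\sigma(q)}} \val_j \ge \frac{1}{2}\opt$.

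The heavy lifting is entirely absorbed into the Rounding Lemma and Lemma~\ref{th:iterativeRounding}; once those are in place, the only subtle point is ensuring that the isolation property persists after partial fixing. This is built directly into the definition (the property must hold after removing any subset of clusters), which turns the induction on $i$ into a one-line observation. The main conceptual obstacle, therefore, lies not in this concluding theorem but in having chosen the isolation property strong enough to fuel the induction yet broad enough to capture natural instances such as disentangleable clusters.
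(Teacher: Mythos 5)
Your proposal is correct and follows essentially the same route as the paper: apply Lemma~\ref{th:iterativeRounding} with $i = q$, observe that the $q$ isolated clusters exhaust $\cset$ so the right-hand side equals the full LP objective, and conclude via the LP relaxation bound $L(x^*,z^*) \ge \opt$. Your added checks on well-definedness of the iterations and feasibility of the combined assignment are sound and merely make explicit what the paper leaves implicit.
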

\begin{proof}
By Lemma~\ref{th:iterativeRounding}, the final assignment $\sigma = \sigma(q)$ returned by the algorithm has profit at least 
\[
 \sum_{j\in S_\sigma}\val_j 
 \geq  \frac{1}{2}\sum_{l \in \cset}\sum_{k \in \ksetarg{l}} \sum_{j\in \sset}\val_j x^*_{jk}
 \geq \frac{1}{2} \opt. 
\]
\vneg
\end{proof}

\bibliographystyle{abbrv}
\bibliography{bibliography}
\end{document}